\documentclass[onecolumn]{article}


\usepackage[english]{babel}				
\usepackage{cite}						
\usepackage{amssymb}					
\usepackage[alsoload=binary]{siunitx}			
\usepackage[printonlyused,withpage]{acronym}	
\usepackage{etoolbox}					
\usepackage{xstring}					
\usepackage{amsthm} 					
\usepackage{mathtools}					
\usepackage{authblk}					

\newacro{sinr}[SINR]{signal-to-interference-plus-noise ratio}
\newacro{lte}[LTE]{Long Term Evolution}
\newacro{3gpp}[3GPP]{3rd Generation Partnership Project}
\newacro{comp}[CoMP]{Coordinated Multi-Point}
\newacro{bs}[BS]{base station}
\newacro{jt}[JT]{Joint Transmission}
\newacro{dps}[DPS]{Dynamic Point Selection}
\newacro{cs}[CS]{Coordinated Scheduling}
\newacro{eicic}[eICIC]{enhanced Inter-Cell Interference Cancellation}
\newacro{csi}[CSI]{channel state information}
\newacro{cran}[C-RAN]{Cloud-Radio Access Network}
\newacro{ue}[UE]{user equipment}
\newacro{ofdma}[OFDMA]{Orthogonal Frequency Division Multiple Access}
\newacro{prb}[PRB]{Physical Resource Block}
\newacro{siso}[SISO]{Single-Input-Single-Output}
\newacro{csirs}[CSI-RS]{\ac{csi} Reference Signal}
\newacro{pf}[PF]{Proportional Fair}
\newacro{mcs}[MCS]{modulation and coding scheme}
\newacro{fdd}[FDD]{Frequency Division Duplexing}
\newacro{crs}[CRS]{Cell-Specific Reference Signal}
\newacro{inlp}[INLP]{integer non-linear program}
\newacro{ilp}[ILP]{integer linear program}

\newcommand{\M}{\mathcal{M}}
\newcommand{\N}{\mathcal{N}}
\newcommand{\myL}{\mathcal{L}}
\newcommand{\I}{\mathcal{I}}
\newcommand{\J}{\mathcal{J}}
\newcommand{\note}[2]{\ac{#1}~${#2}$}

\newcommand*{\balfa}[2]{%
	\IfEqCase{#1}{
		{0}{\boldsymbol{\bar{\alpha}}}
		{1}{\boldsymbol{\bar{\alpha}}_{#2}}
	}[]%
}
\robustify{\balfa}

\newcommand{\mcsi}[1]{%
	\IfEqCase{#1}{
		{0}{\ac{csi}$_\text{lte}$}
		{8}{\ac{csi}$_\text{lte}^\text{R-8}$}
		{10}{\ac{csi}$_\text{lte}^\text{R-10}$}
		{11}{\ac{csi}$_\text{lte}^\text{R-11}$}
	}[]%
}
\robustify{\mcsi}

\newtheorem{ex}{Example}
\newtheorem{cor}{Corollary}
\newtheorem{prop}{Proposition}

\begin{document}

\title{Centralized Coordinated Scheduling in LTE-Advanced Networks}

\author[1]{Oscar~D.~Ramos-Cantor\thanks{Email: oramos@nt.tu-darmstadt.de}}
\author[2]{Jakob~Belschner}
\author[1]{Ganapati~Hegde}
\author[1]{Marius~Pesavento}
\affil[1]{Communication Systems Group, Darmstadt University of Technology, 64283 Darmstadt, Germany}
\affil[2]{Technology Innovation Division, Deutsche Telekom AG, 64295 Darmstadt, Germany}

\date{\vspace{-5ex}}
\maketitle

\begin{abstract}
This work addresses the problem associated with coordinating scheduling decisions among multiple base stations in an LTE-Advanced downlink network in order to manage inter-cell interference with a centralized controller. To solve the coordinated scheduling problem an integer non-linear program is formulated that, unlike most existing approaches, does not rely on exact channel state information but only makes use of the specific measurement reports defined in the 3GPP standard. An equivalent integer linear reformulation of the coordinated scheduling problem is proposed, which can be efficiently solved by commercial solvers. Extensive simulations of medium to large-size networks are carried out to analyze the performance of the proposed coordinated scheduling approaches, confirming available analytical results reporting fundamental limitations in the cooperation due to out-of-cluster interference. Nevertheless, the schemes proposed in this paper show important gains in average user throughput of the cell-edge users, especially in the case of heterogeneous networks.\\

\noindent
\textbf{Keywords:} 4G mobile communication, scheduling algorithms, integer linear programming.
\end{abstract}

\section{Introduction}\label{sec_intro}
Interference is one of the main limiting factors of today's cellular communication networks in terms of user and network throughputs, especially when operating with full frequency reuse to achieve high spectral efficiency \cite{Power_allocation_Kiani,power_control_coord_scheduling_Kiani, int_4G_Vrzic,int_align_Alcatel}. Nowadays, the demand for high data rates is constantly increasing \cite{Cisco_VNI_2016}. In modern cellular networks the users expect to enjoy excellent network performance irrespective of their geographic location and the load conditions of the network. Thus, new solutions are required in order to fulfill the ever increasing requirements, in particular for the users located at the cell-edge suffering from large path loss and strong inter-cell interference. Promising advances in this aspect have been made with multi-antenna technology \cite{SMUX_Tse,STC_Paulraj, STC_Gershman, mimo_lte_Lee}, network densification with interference management schemes \cite{HetNets_Andrews,survey_LTE_Lee,eICIC_HetNets_LopezPerez}, and \ac{comp} transceiver techniques \cite{CoMP_Droste,CoMP_Tx_Rx_Lee}.

In this work \ac{comp} network operation is studied, where the \acp{bs}, connected within a cooperation cluster, are prompted to cooperate with each other with the objective of improving the overall network performance, even at the expense of their individual cell or user throughputs \cite{cooperation_Larsson}. In the literature, three main \ac{comp} schemes are considered for the downlink scenario \cite{CoMP_Baracca,CoMP_Beylerian}. These are: \textit{i)} \ac{jt}, where multiple \acp{bs} simultaneously transmit a common message to a \ac{ue}, usually located at the cell-edge, \textit{ii)} \ac{dps}, where at each transmission time interval, the \ac{ue} can be served by a different \ac{bs} without triggering handover procedures, and \textit{iii)} \ac{cs}, where the \acp{bs} jointly make the scheduling decisions in order to manage the interference experienced by the \acp{ue} in the cooperation cluster \cite{SPAWC_Oscar}. This paper focuses on the last \ac{comp} scheme.

The performance of the above mentioned \ac{comp} schemes heavily depends on the \ac{csi} available at the transmitter. This \ac{csi} can be of different types such as  instantaneous channel coefficients or user's average achievable downlink data rates, among others, where the former represents the deepest level of detail and finest granularity, while the latter has the highest abstraction and aggregation levels. In practical downlink networks, where perfect global knowledge of the instantaneous channel coefficients is not available at the \acp{bs}, \ac{csi} is typically obtained in form of achievable data rate measurement reports generated by the \acp{ue}, averaged over multiple time/frequency/space dimensions and quantized to reduce the signaling overhead. Moreover, the \ac{csi} estimation process is only periodically carried out by the \acp{ue}, to limit the processing and transmission overheads, thus, saving energy consumption at the expense of outdated \ac{csi}. In this work, the \ac{comp} problem formulation is based on practical considerations of the \ac{csi}, in form of periodic achievable data rate measurement reports, in the following referred to as \emph{\ac{csi} reports}.

The network architecture, in which the \ac{comp} schemes are implemented, also influences the performance of such schemes. There are two main \ac{comp} network architectures, namely, centralized and decentralized \cite{central_decentral_CoMP}. In the case of centralized \ac{comp}, a central controller is connected to multiple \acp{bs} via backhaul links. This central controller is in charge of gathering and using the \ac{csi} reports, in order to make a coordinated decision among the connected \acp{bs}. For the decentralized \ac{comp} case, decisions are individually made by each \ac{bs} based on the information exchanged with neighboring \acp{bs}. A trade-off between coordination gains and system requirements, such as signaling overhead and computational complexity, needs to be found when designing a proper \ac{comp} solution. In the case of centralized \ac{comp}, high coordination gains are achievable at the expense of high computational complexity and large signaling overhead. On the other hand, decentralized \ac{comp} requires significantly less information exchange with lower coordination gains.

Over the past years, important research has been carried out regarding \ac{comp} schemes under different network architectures and \ac{csi} assumptions. In \cite{JT_DPS_Maattanen} and \cite{CoMP_DCS_JT_Mondal}, \ac{jt} and \ac{dps} schemes based on the enhanced \ac{csi} reports supported by \ac{lte}-Advanced Release 11, in the following denoted as \mcsi{11} reports, have been investigated. The results therein show throughput gains for the cell-edge users mainly, and the possibility to improve mobility management by means of \ac{dps}. Barbieri et al. studied \ac{cs} as a complement of \ac{eicic} in heterogeneous networks in  \cite{CoMP_HetNet_Barbieri}. In their scheme, cooperation takes place in form of \ac{cs} supported by beamforming in order to mitigate the interference caused by the macro \acp{bs}, to the \acp{ue} connected to the small cells. Multiple \ac{csi} reports are generated, where all possible precoders the macro \ac{bs} can select from a finite precoder codebook are considered for the cooperation. The results present negligible gain for \ac{eicic} with \ac{cs}, in comparison to \ac{eicic}-only.  In \cite{JT_C-RAN_Davydov}, a \ac{cran} architecture is used for centralized \ac{comp} \ac{jt} in heterogeneous networks, which enables the cooperation of larger cluster sizes. In that case, gains over \ac{eicic}-only are observed, especially for large cluster sizes. Authors in \cite{Nokia_2014} propose centralized and decentralized \ac{comp} \ac{cs} schemes that utilize \mcsi{11} reports, in which \emph{muting} is applied to one \ac{bs} at a time. A \ac{bs} is called muted if it does not transmit data on a specific time/frequency resource to any of its connected \acp{ue}. It has been shown that under this muting condition, both centralized and decentralized schemes achieve the same performance, favoring the decentralized scheme due to the reduced information exchange. Moreover, in \cite{Nokia_2015} the authors extend the cooperation scheme of \cite{Nokia_2014}, to introduce muting of more than one \ac{bs} per scheduling decision in a larger network. A greedy \ac{cs} algorithm is presented to solve the centralized problem, which yields limited additional gain with respect to the decentralized scheme with overlapping cooperation clusters. The coordination scheme of \cite{Nokia_2015} consists in a greedy optimization procedure. It is therefore suboptimal and further investigation regarding the optimally achievable performance of coordination, in the case of \mcsi{11} reports, has not been carried out. Additionally, the results are focused on macro-only networks, where the gains of cooperation are restricted due to similar interfering power levels experienced from multiple \acp{bs}.

Although the above mentioned works show that \ac{comp} schemes enhance the user throughput with respect to a network operating without any cooperation, no detailed studies are carried out in order to establish the maximum achievable gains that \ac{comp} schemes can offer in realistic network scenarios and under \ac{lte}-Advanced specific \ac{csi} reports. In \cite{Comp_limit_Lozano}, it has been demonstrated from an analytical perspective that cooperative schemes have fundamentally limited gains. That is, even under the assumption of centralized coordination and ideal \ac{csi} in form of instantaneous channel coefficients, the cooperation gains are limited due to the residual interference from \acp{bs} outside of the cooperation area, the signaling overheads and the finite nature of the time/frequency/space resources. In the paper at hand, such limits are investigated under practical conditions by means of system level evaluations. For that purpose, an optimal \ac{comp} \ac{cs} scheme is proposed and analyzed in detail for an \ac{lte}-Advanced downlink network with centralized architecture. The problem formulation is based on multiple \ac{csi} reports generated by the \acp{ue} and gathered by the central controller, which uses this information to determine the coordinated scheduling decisions for all connected \acp{bs}. The central controller then decides which \acp{bs} serve their connected \acp{ue} on a given time/frequency resource, and which \acp{bs} are muted in order to reduce the interference caused to the \acp{ue} served by the neighboring transmitting \acp{bs}. The main contributions of this work are summarized as follows:
\begin{itemize}
	\item The \ac{cs} problem, where \acp{bs} cooperate by muting time/frequency resources  based on standardized \mcsi{11} reports, is formulated as an \ac{inlp}.
	\item The non-linear \ac{cs} with muting problem is reformulated into a computationally tractable \emph{equivalent} \ac{ilp}, which enjoys of low computational complexity and can be efficiently solved by commercial solvers. This reformulation is based on lifting technique and exploits specific separability and reducibility properties of the problem. Thus, making the optimization scheme applicable as a valuable benchmark scheme in middle to large scale networks.
	\item A configurable heuristic algorithm is proposed as an extension to the greedy algorithm in \cite{Nokia_2015}, which achieves an excellent trade-off between performance and computational complexity.
	\item Extensive numerical simulations are carried out, under practical scenarios for macro-only and heterogeneous networks, in order to assess the maximum achievable gains of the proposed and current \ac{comp} \ac{cs} schemes.
\end{itemize}

\section{System model}\label{sec_sys_mod}
A cellular network is considered as illustrated in \figurename~\ref{fig_sys_model}, where a cooperation cluster of $M$ \acp{bs}, operating in \ac{fdd} mode, serves $N$ \acp{ue} in the downlink. \ac{ofdma} is assumed with frequency reuse one, where at each transmission time interval, all \acp{bs} can make use of the same $L$ \acp{prb} for transmission. Thus, inter-cell interference affects the \acp{ue}, especially at the cell-edge. Additionally, interference from \acp{bs} outside of the cooperation cluster is considered. The operation of the cooperation cluster is managed by a central controller with backhaul connectivity to all $M$ \acp{bs}. In the following, the sets of indexes $\M = \{1,\dots,M\}$, $\N = \{1,\dots,N\}$ and $\myL = \{1,\dots,L\}$ are used to address the \acp{bs}, \acp{ue} and \acp{prb}, respectively.

The received power at \note{ue}{n\in\N}, from \note{bs}{m\in\M}, on \note{prb}{l\in\myL}, is denoted as $p_{n,m,l}$. Hence, for \ac{siso} transmission,
\begin{equation}
\label{eq_p_siso}
	p_{n,m,l}=|g_n\,h_{n,m,l}|^2\,\phi_{m,l},
\end{equation}
where $\phi_{m,l}$ corresponds to the transmit power of \note{bs}{m} on \note{prb}{l}, the complex coefficient $h_{n,m,l}$ represents the amplitude gain of the downlink channel between \note{bs}{m} and \note{ue}{n} on \note{prb}{l}, and $g_n$ is the receiver amplitude processing gain. In \eqref{eq_p_siso}, the transmitted symbols are assumed to exhibit unit average transmit power. When summing over all \acp{prb}, the total received power at \note{ue}{n} from \note{bs}{m} is obtained as
\begin{equation}
\label{eq_p_total}
	p_{n,m}=\sum_{l=1}^L p_{n,m,l}.
\end{equation}

\begin{figure}[!t]
	\centering
	\includegraphics{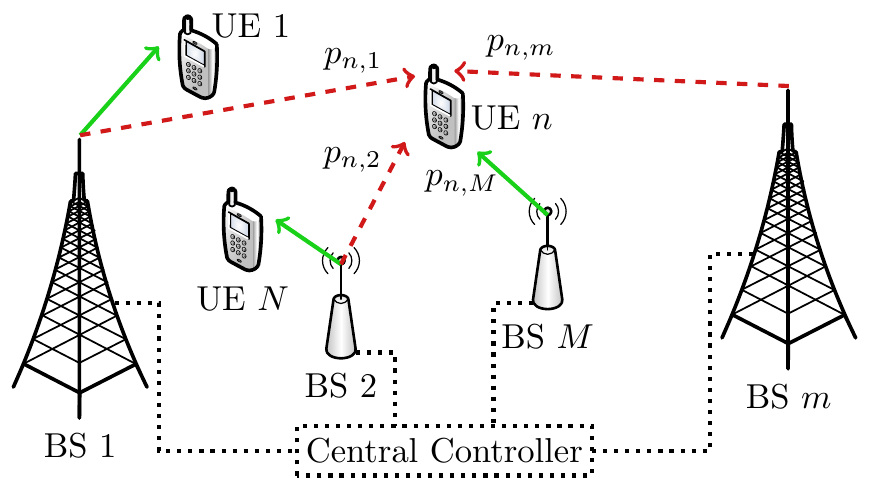}
\caption{Cooperation cluster of $M$ \acp{bs} and $N$ \acp{ue} in the downlink. The \acp{bs} are connected through the backhaul to a central controller.}
\label{fig_sys_model}
\end{figure}

Generally, the serving \ac{bs} of \note{ue}{n\in\N} is selected as the \ac{bs} from which the highest total received power is obtained, as defined in \eqref{eq_p_total}. In the case of heterogeneous networks, however, such a selection strategy causes a loss in the cell-splitting gains expected by the introduction of the small cells \cite{cell_split_gain_Galiotto}. In order to achieve load balancing, i.e., a ``fair" distribution of the served \acp{ue} among the small cells and the macro \acp{bs}, techniques like cell range expansion are applied in e.g., \ac{lte}-Advanced \cite{user_association_BIAS_Ye}, where the \acp{ue} are instructed to add a constant off-set in the computation of the total received power of the small cells. Thus, increasing the number of \acp{ue} served by the small cells. In this paper, both homogeneous macro-only and heterogeneous networks are studied, where for the latter case, cell range expansion is applied. The $N\times M$ connection matrix $\mathbf{C}$ is defined, with elements
\begin{equation}
\label{eq_c}
	c_{n,m}= \left\{
		\begin{aligned}
			1 & \ \text{if \ac{ue} }n \text{ is served by \ac{bs} }m\in\M\\
			0 & \ \text{otherwise},
		\end{aligned}\right.
\end{equation}
characterizing the serving conditions between \acp{bs} and \acp{ue}. It is assumed that only one \ac{bs} serves \note{ue}{n} over all \acp{prb}. Hence,
\begin{equation}
\label{eq_one_BS}
	\sum_{m\in\M} c_{n,m}=1 \ \forall n\in\N.
\end{equation}
 It is further assumed, for simplicity, that the \acp{ue} are quasi-static, such that no handover procedures are triggered between the \acp{bs}. Therefore, the connection matrix $\mathbf{C}$ is assumed to be constant during the considered operation time.

The set of indexes of \acp{bs} within the cooperation cluster, that interfere with \note{ue}{n\in\N} is defined as ${\I_n=\{m~|~c_{n,m}=0,~\forall m\in\M\}}$, with cardinality ${|\I_n|=M-1}$. Moreover, since \note{ue}{n} experiences different interfering power levels from the $\I_n$ interfering \acp{bs}, the set $\I_n' \subseteq \I_n$ of indexes of the $M'$ \emph{strongest interfering \acp{bs}} of \note{ue}{n}, is defined such that
\begin{equation}
\label{eq_In_prime}
\min_{m'\in\I_n'}p_{n,m'}\geq\max_{m\in\I_n\backslash\I_n'}p_{n,m},
\end{equation}
i.e., the set $\I_n'$ contains the indexes of the $M'$ interfering \acp{bs} with the highest total received power at \note{ue}{n}, as calculated in \eqref{eq_p_total}. The number of strongest interfering \acp{bs} is bounded as ${0 \leq M' \leq M-1}$, where equality of the sets $\I_n'$ and $\I_n$ holds, if $M'=M-1$. The sets $\I_n$ and $\I_n'$ apply for all \acp{prb} in the reporting period.

Within the cooperation cluster, the \acp{bs} cooperate in the form of coordinated scheduling with muting, as previously mentioned in Section~\ref{sec_intro}. The central controller is then, in charge of managing the downlink transmissions of the \acp{bs} where, at each transmission time interval and on a per \ac{prb} basis, each \ac{bs} can be requested to abstain from transmitting data. Hence, the interference caused to \acp{ue} located in neighboring \acp{bs} is reduced on the \acp{prb} with muted \acp{bs}. Given the muting decision matrix, $\balfa{0}{}$, of dimensions ${M\times L}$ and elements
\begin{equation}
\label{eq_mute_dec}
	\bar{\alpha}_{m,l}= \left\{
		\begin{aligned}
			1 & \ \text{if \ac{bs} }m\in\M \text{ is muted on \ac{prb} }l\in\myL\\
			0 & \ \text{otherwise},\\
		\end{aligned}\right.
\end{equation}
the \ac{sinr} of \note{ue}{n\in\N}, which is served by \note{bs}{k\in\M},  on \note{prb}{l}, is then defined as
\begin{equation}
\label{eq_sinr}
	\gamma_{n,l}\left(\balfa{1}{l}\right)=\frac{(1-\bar{\alpha}_{k,l})\,p_{n,k,l}}{I_{n,l}^{\text{cc}}\left(\balfa{1}{l}\right)+I_{n,l}^{\text{oc}}+\sigma^2}.
\end{equation}
The vector $\balfa{1}{l}$, is equivalent to the $l$-\textit{th} column of $\balfa{0}{}$. The numerator corresponds to the average received power at \note{ue}{n}, from the serving \note{bs}{k}, on \note{prb}{l}, as defined in \eqref{eq_p_siso}. The first term in the denominator corresponds to the average inter-cell interference from the \acp{bs} within the cooperation cluster, with
\begin{equation}
\label{eq_sinr_icc}
	I_{n,l}^{\text{cc}}\left(\balfa{1}{l}\right)=\sum_{m\in \I_n} (1-\bar{\alpha}_{m,l})\,p_{n,m,l},
\end{equation}
$I_{n,l}^{\text{oc}}$ is the average out-of-cluster interference, and $\sigma^2$ is the noise power assumed, without loss of generality, to be constant for all \acp{ue} over all \acp{prb}. It is worth to notice that the out-of-cluster interference $I_{n,l}^{\text{oc}}$ is assumed to be independent of the muting decision matrix $\balfa{0}{}$, since the central controller is not aware of the muting decisions made by the \acp{bs} outside of the cooperation cluster.

The achievable data rate of \note{ue}{n\in\N} on \note{prb}{l\in\myL}, is modeled as a function of the \ac{ue}'s \ac{sinr}. Hence,
\begin{equation}
\label{eq_r_gen}
	r_{n,l}=f\left(\gamma_{n,l}\right),
\end{equation}
where $f\left(\gamma_{n,l}\right)$ denotes a mapping function from the \ac{sinr} of \note{ue}{n} on \note{prb}{l}, to the achievable data rate.
\begin{ex}
A mapping function based on Shannon's capacity bound is
\begin{equation}
\label{eq_r_ex}
	f\left(\gamma_{n,l}\right)=D\log_2\left(1+\gamma_{n,l}\right),
\end{equation}
where the scaling factor $D$ is a general term to represent the \ac{prb} bandwidth and the selected \ac{mcs} \cite{CQI_table_Fan}. In \eqref{eq_r_ex}, it is assumed without loss of generality that all \acp{prb} correspond to the same time-bandwidth product.
\end{ex}

In the literature, it is common to assume that the central controller has perfect \ac{csi} knowledge, in form of the instantaneous channel coefficients, $h_{n,m,l}$, as introduced in \eqref{eq_p_siso}. Thus, the computation of the \ac{sinr}, $\gamma_{n,l}\left(\balfa{1}{l}\right)$, and the achievable data rates, $r_{n,l}$, as defined in \eqref{eq_sinr} and \eqref{eq_r_gen}, respectively, is carried out in a straightforward manner for any possible muting decision $\balfa{0}{}$. However, in practical conditions such as in \ac{lte}-Advanced networks, acquiring \ac{csi} in form of instantaneous channel coefficients represents a significant challenge, due to the large signaling overhead caused by the fine granularity of the time/frequency/space dimensions, and the high sensitivity to pilot contamination of the channel estimation, among others. For that reason, the \ac{csi} in \ac{lte}-Advanced is typically available in form of achievable data rate measurement reports, i.e., \ac{csi} reports, which contain average information of multiple time/frequency/space resources for a subset of possible muting decisions $\balfa{0}{}$, as defined in \eqref{eq_mute_dec}. Thus, the processing and signaling overhead is reduced, at the expense of limited \ac{csi} knowledge for the \ac{comp} \ac{cs} scheme.

\section{\ac{csi} reporting for \ac{lte}-Advanced \ac{comp} \ac{cs}}\label{sec_csi_rep}
To enable opportunistic scheduling in \ac{lte}, \ac{csi} reports are supported since the first release, i.e., Release~8 \cite{3GPP_TS_36.213}. In \ac{lte}-Advanced, \ac{comp} operation has been included in Release~11 and beyond. To support \ac{csi} estimation at the \acp{ue}, the transmission of \acp{csirs} from the \acp{bs}, has been introduced as an extension to the common \acp{crs} \cite{3GPP_TR_36.814,4G_Dahlman}. One major enhancement in \ac{lte}-Advanced, with respect to the \ac{lte} Release 8 \acp{crs}, is the possibility to configure muted \acp{csirs}, i.e., \acp{csirs} with zero transmission power, enabling the \acp{ue} to estimate \ac{csi} from specific neighboring \acp{bs} without interference from the serving \ac{bs}. Due to this feature, the \acp{ue} can generate multiple \mcsi{11} reports that reflect different serving and interfering conditions in the network \cite{JT_DPS_Maattanen,CoMP_DCS_JT_Mondal}.
\begin{ex}
In a specific \ac{csirs} configuration, \note{bs}{m\in\M} can be defined as muted, so that the \mcsi{11} report provides information regarding the achievable data rates for \ac{cs} with muting. The specific muting decision, considered at the central controller, is $\balfa{0}{}=\mathbf{0}_{M\backslash m\times L}$, where $\mathbf{0}_{M\backslash m\times L}$ is an ${M\times L}$ matrix with zero elements in all but the $m$-\textit{th} row.
\end{ex}

For a cooperation cluster with $M$ \acp{bs}, a total of ${J=2^M-1}$ muting decisions can be made per \note{prb}{l\in\myL}. In this work the practical case is considered, where the \ac{cs} operation is managed by the central controller based on the \mcsi{11} reports provided by the \acp{ue}. Since the \acp{sinr} and achievable data rates of \note{ue}{n\in\N}, as defined in \eqref{eq_sinr} and \eqref{eq_r_gen}, respectively, are dominated by its strongest interfering \acp{bs} \cite{Nokia_2014}, in the following it is assumed that the \acp{ue} generate a total of $J'=2^{M'}$ \mcsi{11} reports per \ac{prb}, with $J'<J$. Then, each \mcsi{11} report only considers the $M'$ strongest interfering \acp{bs} of \note{ue}{n}, as described by the set $\I_n'$, introduced in Section~\ref{sec_sys_mod}.

Each of the $J'$ \mcsi{11} reports, generated by \note{ue}{n\in\N} on \note{prb}{l\in\myL}, reflects a unique \emph{interference scenario} for its strongest interfering \acp{bs}. More specifically, the interference scenario ${j\in\J'}$, with ${\J'=\{1,\dots,J'\}}$, is characterized by the \emph{muting indicator} set $\J_{n,j}$, which contains the indexes of the (strongest) interfering \acp{bs} considered to be muted in the $j$-\textit{th} \mcsi{11} report of \note{ue}{n}. Hence, the set $\J_n=\mathbb{P}(\I_n')$ contains all $J'$ muting indicator sets for \note{ue}{n}, with $\mathbb{P}(\cdot)$ denoting the set of all subsets of $\I_n'$. The set $\J_n$ is common to all \acp{prb}, due to the definition of the strongest interfering \acp{bs} of \note{ue}{n}, as in~\eqref{eq_In_prime}. From the muting indicator set, $\J_{n,j}$, the \emph{muting pattern} of the $j$-\textit{th} \mcsi{11} report of \note{ue}{n}, on \note{prb}{l}, is defined as
\begin{equation}
\label{eq_alpha}
	\alpha_{n,m'\!,l,j}= \left\{
		\begin{aligned}
			1 & \ \text{if } m'\!\in \J_{n,j} \text{ on \ac{prb} }l\\
			0 & \ \text{otherwise},\\
		\end{aligned}\right. \ \forall m'\!\in\I_n',
\end{equation}
i.e., $\alpha_{n,m',l,j}=1$, if the (strongest) interfering \note{bs}{m'}, is muted on \note{prb}{l}, under interference scenario ${j\in\J'}$. The definition in \eqref{eq_alpha} considers only the set of strongest interfering \acp{bs} of \note{ue}{n}, i.e., $\I_n'$. Therefore, a constant muting state of the remaining \acp{bs} in the cooperation cluster is required, for all $\J'$ interference scenarios.
In the following, it is assumed without loss of generality, that  $\alpha_{n,m,l,j}=0, \forall m\notin\I_n', \forall j\in\J'$. Although, the definition of the muting pattern in \eqref{eq_alpha} is similar to the definition of the muting decision in \eqref{eq_mute_dec}, the two concepts are different. The muting pattern describes the assumed muting conditions during the generation of the \mcsi{11} reports for the different interference scenarios, while the muting decision is imposed by the central controller, to the \acp{bs} within the cooperation cluster, as the result of the implementation of the \ac{cs} with muting scheme.

For the generation of the \mcsi{11} reports, \note{ue}{n\in\N} calculates the \ac{sinr} and the achievable data rates, on \note{prb}{{l\in\myL}} under interference scenario ${j\in\J'}$. Therefore, similar to \eqref{eq_sinr}, the \ac{sinr} of \note{ue}{n} on \note{prb}{l}, under interference scenario $j$, is defined as
\begin{equation}
\label{eq_sinr_2}
	\gamma_{n,l,j}\left(\alpha_{n,m'\!,l,j}\right)=\frac{p_{n,k,l}}{I_{n,l,j}^{\text{si}}\left(\alpha_{n,m'\!,l,j}\right)+I_{n,l}^{\text{wi}}+I_{n,l}^{\text{oc}}+\sigma^2},
\end{equation}
where the first term in the denominator of \eqref{eq_sinr} has been decomposed into two terms corresponding to the interference from the strongest interfering \acp{bs} of \note{ue}{n}, i.e., $I_{n,l,j}^{\text{si}}\left(\alpha_{n,m'\!,l,j}\right)$, and the interference from the remaining (weakest) interfering \acp{bs} of \note{ue}{n}, denoted by $I_{n,l}^{\text{wi}}$. From the previous discussion on the muting patterns of \note{ue}{n}, the interference from the strongest interfering \acp{bs} that can cooperate to improve the \ac{sinr} of \note{ue}{n}, depends on interference scenario~$j$, and thus, the muting pattern, as
\begin{equation}
\label{eq_si_int}
	I_{n,l,j}^{\text{si}}\left(\alpha_{n,m'\!,l,j}\right) = \sum_{m'\!\in \I_n'} (1-\alpha_{n,m'\!,l,j})\,p_{n,m'\!,l}.
\end{equation}
On the other hand, the interference from the weakest interfering \acp{bs} of \note{ue}{n} is assumed to be constant and independent of the possible muting decisions, with
\begin{equation}
\label{eq_wi_int}
	I_{n,l}^{\text{wi}} = \sum_{\mathclap{m\in \I_n\backslash\I_n'}} p_{n,m,l}.
\end{equation}
Furthermore, the out-of-cluster interference and the noise variance are also assumed to be constant terms among all the $J'$ interfering scenarios considered in the \mcsi{11} reports.

To complete the information for the \mcsi{11} reports, $r_{n,l,j}$ denotes the achievable data rate of \note{ue}{n\in\N}, on \note{prb}{l\in\myL}, considered under interference scenario~${j\in\J'}$. The calculation of $r_{n,l,j}$ follows the definition in \eqref{eq_r_gen}, with 
\begin{equation}
\label{eq_r_gen_2}
	r_{n,l,j}=f\left(\gamma_{n,l,j}\right).
\end{equation}

\begin{prop}\label{lem_sinr}
For \note{ue}{n\in\N}, on \note{prb}{l\in\myL}, if ${\J_{n,i}\subsetneq\J_{n,j}, \forall i,j\in\J', i\neq j}$, then $\gamma_{n,l,i}<\gamma_{n,l,j}$.
\end{prop}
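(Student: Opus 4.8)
The plan is to reduce the claim to a monotonicity statement about a single term in the denominator of the \ac{sinr}. Inspecting \eqref{eq_sinr_2}, the numerator $p_{n,k,l}$ depends only on the serving \note{bs}{k} and is independent of the interference scenario, while among the denominator terms $I_{n,l}^{\text{wi}}$, $I_{n,l}^{\text{oc}}$ and $\sigma^2$ are all assumed constant across the $J'$ scenarios. Hence the scenario index enters $\gamma_{n,l,j}$ exclusively through the strongest-interferer term $I_{n,l,j}^{\text{si}}$, and since $\gamma_{n,l,j}$ is a strictly decreasing function of this denominator contribution (the numerator and the remaining constants being strictly positive), it suffices to establish the strict inequality $I_{n,l,j}^{\text{si}}<I_{n,l,i}^{\text{si}}$.

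Next I would translate the inclusion of the muting indicator sets into an inclusion of the \emph{un-muted} strongest interferers. From the muting-pattern definition \eqref{eq_alpha}, the factor $(1-\alpha_{n,m'\!,l,j})$ appearing in \eqref{eq_si_int} equals one precisely when $m'\in\I_n'\backslash\J_{n,j}$ and zero otherwise, so that
\begin{equation*}
I_{n,l,j}^{\text{si}}=\sum_{m'\in\I_n'\backslash\J_{n,j}}p_{n,m'\!,l}.
\end{equation*}
Because every muting indicator set is by construction a subset of $\I_n'$, the hypothesis $\J_{n,i}\subsetneq\J_{n,j}$ yields $\I_n'\backslash\J_{n,j}\subsetneq\I_n'\backslash\J_{n,i}$; concretely, any index $m^\star\in\J_{n,j}\backslash\J_{n,i}$ (such an index exists by strictness of the inclusion) lies in $\I_n'$, belongs to the un-muted set of scenario $i$, yet is removed from that of scenario $j$, which already secures the set inclusion and isolates exactly the additionally muted interferers.

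Finally I would invoke the positivity of the received powers to upgrade the inclusion to a strict numerical inequality. Using \eqref{eq_p_siso}, the difference telescopes to
\begin{equation*}
I_{n,l,i}^{\text{si}}-I_{n,l,j}^{\text{si}}=\sum_{m'\in\J_{n,j}\backslash\J_{n,i}}p_{n,m'\!,l}>0,
\end{equation*}
so that $I_{n,l,j}^{\text{si}}<I_{n,l,i}^{\text{si}}$ and therefore $\gamma_{n,l,i}<\gamma_{n,l,j}$, as claimed. The only genuinely delicate point is this last strictness step: the result hinges on at least one additionally muted \ac{bs} contributing strictly positive received power on \note{prb}{l}. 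I would therefore make explicit the mild and physically natural assumption that $p_{n,m'\!,l}=|g_n\,h_{n,m'\!,l}|^2\,\phi_{m'\!,l}>0$ for every $m'\in\I_n'$, which merely excludes the degenerate boundary case of a strongest interferer being completely silent on a particular \ac{prb}; absent this assumption only the weak inequality $\gamma_{n,l,i}\leq\gamma_{n,l,j}$ would survive.
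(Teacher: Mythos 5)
Your proof is correct and follows essentially the same route as the paper's: both reduce the claim to the strict inequality $I_{n,l,i}^{\text{si}}>I_{n,l,j}^{\text{si}}$ by observing that the strongest-interferer term is the only scenario-dependent quantity in \eqref{eq_sinr_2}, and both obtain that inequality from the additional muted \acp{bs} in $\J_{n,j}\backslash\J_{n,i}$. Your only addition is to make explicit the positivity assumption $p_{n,m'\!,l}>0$ for $m'\in\I_n'$ needed for strictness, which the paper's proof leaves implicit.
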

\begin{proof}
See Appendix \ref{app_lem_sinr}.
\end{proof}
That is, the Proposition~\ref{lem_sinr} states that the \ac{sinr} of \note{ue}{n\in\N}, on \note{prb}{l\in\myL}, increases when muting additional (strongest) interfering \acp{bs}.

\begin{cor}\label{lem_rate}
For \note{ue}{n\in\N}, on \note{prb}{l\in\myL}, if ${\J_{n,i}\subsetneq\J_{n,j}, \forall i,j\in\J', i\neq j}$ and $f(\gamma_{n,l})$, introduced in \eqref{eq_r_gen}, is a non-decreasing function, then $r_{n,l,i}\leq r_{n,l,j}$.
\end{cor}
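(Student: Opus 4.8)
The plan is to obtain the claim as a one-step composition of Proposition~\ref{lem_sinr} with the monotonicity of the rate mapping $f$, so I do not expect any genuine obstacle here. First I would observe that the hypothesis of the corollary, ${\J_{n,i}\subsetneq\J_{n,j}}$ for $i\neq j$, is \emph{identical} to the premise of Proposition~\ref{lem_sinr}. Invoking that proposition directly therefore yields the strict \ac{sinr} ordering $\gamma_{n,l,i}<\gamma_{n,l,j}$ for \note{ue}{n\in\N} on \note{prb}{l\in\myL}. This is the only nontrivial ingredient, and it is supplied by the already-established result, so the remaining work is purely to transport this ordering through $f$.

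Next I would apply the rate mapping to both \acp{sinr}. By the definition of the achievable data rate under interference scenario $j$ in \eqref{eq_r_gen_2}, we have $r_{n,l,i}=f(\gamma_{n,l,i})$ and $r_{n,l,j}=f(\gamma_{n,l,j})$. Since $f$ is assumed to be non-decreasing in its \ac{sinr} argument, the strict inequality $\gamma_{n,l,i}<\gamma_{n,l,j}$ immediately implies $f(\gamma_{n,l,i})\leq f(\gamma_{n,l,j})$, that is, $r_{n,l,i}\leq r_{n,l,j}$, which is exactly the statement to be proved.

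The one point I would flag explicitly, to justify why the conclusion is stated with a non-strict $\leq$ while Proposition~\ref{lem_sinr} gives a \emph{strict} \ac{sinr} gap, is that a merely non-decreasing $f$ may be constant on subintervals; realistic \ac{mcs} mappings saturate or clip at a maximum spectral efficiency, so a strict improvement in \ac{sinr} need not produce a strict improvement in rate. I would emphasize that no stronger property of $f$ is required: neither continuity, nor concavity, nor the specific Shannon form of the Example in \eqref{eq_r_ex} is used, only monotonicity. This makes the corollary applicable to any non-decreasing link-adaptation curve, which is the practically relevant setting.
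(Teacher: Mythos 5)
Your proof is correct and follows exactly the route the paper intends: the corollary is an immediate composition of Proposition~\ref{lem_sinr} (strict SINR ordering) with the assumed monotonicity of $f$ in \eqref{eq_r_gen_2}, which is why the paper states it without a separate appendix proof. Your added remark on why the conclusion is only non-strict (a non-decreasing $f$ may be flat, e.g., under MCS saturation) matches the paper's own later treatment of bounded-MCS scenarios and is a valid justification of the $\leq$.
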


Hence, based on the Corollary \ref{lem_rate}, the achievable data rate of \note{ue}{n\in\N}, on \note{prb}{l\in\myL}, increases or remains constant when muting additional (strongest) interfering \acp{bs}. The observations in the Proposition~\ref{lem_sinr} and the Corollary~\ref{lem_rate} dictate the solution of the \ac{cs} with muting problem formulated in Section \ref{sec_cs}.

\begin{ex}\label{ex_csi}
In an exemplary network with a cooperation cluster of $M=4$ \acp{bs}, and a total of $M'=2$ strongest interfering \acp{bs} per \ac{ue}, \note{ue}{n\in\N} selects \note{bs}{1} and \note{bs}{2} for cooperation, i.e., $\I_n'=\{1,2\}$. Thus, \note{ue}{n} generates ${J'=4}$ \mcsi{11} reports on \note{prb}{l\in\myL}, as summarized in \tablename~\ref{tab_csi_report}. According to the Proposition~\ref{lem_sinr} and the Corollary \ref{lem_rate}, $r_{n,l,2} \geq r_{n,l,3}, r_{n,l,4} \geq r_{n,l,1}$.
\end{ex}

\begin{table}[!t]
\renewcommand{\arraystretch}{1.3}
\caption{\mcsi{11} reports for \note{ue}{n}, on \note{prb}{l}, with $M'=2$}
\label{tab_csi_report}
\centering
\begin{tabular}{cccc}
\hline
\textbf{Int. Scenario} & \textbf{Mut. Ind.} & \textbf{Mut. Pattern} & \textbf{Achiev.}\\
$(j\in\J')$ & $(\J_{n,j})$ & $(\alpha_{n,l,j})$ & \textbf{Data rate}\\
\hline
1 & $\{\varnothing\}$ & $\left[0, 0, 0, 0\right]$ & $r_{n,l,1}$\\
2 & $\{1,2\}$ & $\left[1, 1, 0, 0\right]$ & $r_{n,l,2}$\\
3 & $\{1\}$ & $\left[1, 0, 0, 0\right]$ & $r_{n,l,3}$\\
4 & $\{2\}$ & $\left[0, 1, 0, 0\right]$ & $r_{n,l,4}$\\ \hline
\end{tabular}
\end{table}

\section{\ac{cs} with muting}\label{sec_cs}

\subsection{Proposed \ac{inlp} - Problem formulation}\label{subsec_general_ilp}
At the central controller, the \mcsi{11} reports generated by the \acp{ue} and forwarded by the \acp{bs}, are used in order to compute the \ac{cs} decision. The \ac{cs} decision consists of two main components, namely, a scheduling decision that assigns \acp{prb} to \acp{ue}, and a muting decision that mutes \acp{bs} on particular \acp{prb}, to reduce the interference experienced by the \acp{ue} connected to neighboring \acp{bs}. The matrix variable $\bar{\mathbf{S}}$ of dimensions $N\times L$ and elements
\begin{equation}
\label{eq_sched_dec}
	\bar{s}_{n,l}= \left\{
		\begin{aligned}
			1 & \ \text{if \ac{prb} }l\in\myL \text{ is assigned to \ac{ue} }n\in\N\\
			0 & \ \text{otherwise},\\
		\end{aligned}\right.
\end{equation}
is used to denote the scheduling decision for all \acp{ue} on each \note{prb}{l}, while the $M\times L$ matrix variable $\balfa{0}{}$, with elements as introduced in \eqref{eq_mute_dec}, refers to the muting decision for all \acp{bs} on each \note{prb}{l}. Both decisions depend on each other. On the one hand, the selection of the \acp{ue} to be served in a given \note{prb}{l}, depends on the data rates these \acp{ue} can achieve under a particular muting decision. On the other hand, the muting decision depends on the margin by which the achievable data rates of the \acp{ue} to be served increases with respect to the loss on the achievable data rates of the \acp{ue} connected to the muted \acp{bs}, for that particular muting decision. In the following, an \ac{inlp} is proposed, to carry out joint \ac{bs} muting and \ac{ue} scheduling in a coordinated network.

Typically, the schedulers in mobile communications pursue a trade-off between user throughput and fairness. For that purpose, opportunistic scheduling is applied such as in the case of the \ac{pf} scheduler \cite{Scheduling_Zhu,schedulers_Wu}. The objective of the \ac{pf} scheduler is to maximize the sum, over all \acp{ue}, of the \ac{pf} metrics given by
\begin{equation}
\label{eq_pfs}
	\Omega_n = \frac{r_n}{R_n} \ \forall n\in\N,
\end{equation}
where the ratio between the total instantaneous achievable data rate and the average user throughput over time, denoted by $r_n$ and $R_n$, respectively, of \note{ue}{n\in\N} is considered. The total instantaneous achievable data rate of \note{ue}{n} is calculated as
\begin{equation}
\label{eq_r_inst}
	r_n=g\left(r_{n,l,j},\boldsymbol{\bar{s}}_n,\balfa{0}{}\right) \ \forall l\in\myL, \forall j\in\J',
\end{equation}
where $g(\cdot)$ denotes a function of the achievable data rates of \note{ue}{n}, as defined in \eqref{eq_r_gen_2}, over the \acp{prb} assigned to \note{ue}{n}, as described by the $n$-\textit{th} row of $\bar{\mathbf{S}}$, denoted by $\boldsymbol{\bar{s}}_n$, and the muting decision matrix $\balfa{0}{}$.

The \ac{lte}-Advanced \ac{cs} with muting problem can be formulated as the following \ac{inlp}
\begin{subequations}
\label{eq_imultilp}
	\begin{alignat}{2}
		\underset{\left\{\mathbf{\bar{S}},\boldsymbol{\bar{\alpha}}\right\}}{\max} \,&&&\sum_{n\in\N}\Omega_n \label{eq_multil_obj}\\
		\text{s.t.} \,&&&\nonumber\\
		&&&\bar{\alpha}_{m,l} + \sum_{n\in\N}c_{n,m}\,\bar{s}_{n,l}\leq1 \ \forall m\in\M, \forall l\in\myL, \label{eq_multil_valid_alpha}\\
		&&&r_n = \sum_{l\in\myL}\rho\left(\boldsymbol{r}_{n,l},\balfa{1}{l},\I_n'\right)\,\bar{s}_{n,l}\ \forall n\in\N, \label{eq_multil_total_r}\\
		&&&\bar{s}_{n,l} \in \{0,1\} \ \forall n\in\N, \forall l\in\myL, \label{eq_multil_binary_S}\\
		&&&\bar{\alpha}_{m,l} \in \{0,1\} \ \forall m\in\M, \forall l\in\myL, \label{eq_multil_binary_alpha}
	\end{alignat}
\end{subequations}
where the objective in \eqref{eq_multil_obj} is to maximize the sum of the \ac{pf} metrics over all \acp{ue}, with the \ac{pf} metric of \note{ue}{n\in\N} calculated as in \eqref{eq_pfs}. The constraints in~\eqref{eq_multil_valid_alpha} link the scheduling decision $\mathbf{\bar{S}}$ with the muting decision $\boldsymbol{\bar{\alpha}}$. If \note{bs}{m\in\M} is muted on \note{prb}{l\in\myL}, then \note{prb}{l} should not be assigned to any \note{ue}{n} connected to \note{bs}{m}. Thus, if $\bar{\alpha}_{m,l}=1$ in \eqref{eq_multil_valid_alpha}, for \note{bs}{m}, the second term on the left-hand-side must be equal to zero, which is true in either of the following cases, with the connection indicator $c_{n,m}$ given by \eqref{eq_c}:
\begin{itemize}
	\item No \acp{ue} are connected to \note{bs}{m}, i.e., ${c_{n,m}=0}$, ${\forall n\in\N}$.
	\item \note{prb}{l} is not assigned to any \ac{ue} served by \note{bs}{m}, i.e., $\bar{s}_{n,l}=0$, ${\forall n\in\N}$ such that $c_{n,m}=1$.
\end{itemize}
Furthermore, in the case that \note{bs}{m} is not muted on \note{prb}{l}, i.e., $\bar{\alpha}_{m,l}=0$, the constraints in \eqref{eq_multil_valid_alpha} ensure that single user transmissions are carried out, where each \ac{bs} is allowed to schedule a maximum of one \ac{ue} per \ac{prb}. Additionally, the total instantaneous achievable data rate of \note{ue}{n}, denoted by $r_n$ as introduced in \eqref{eq_r_inst}, is calculated in \eqref{eq_multil_total_r}, with
\begin{equation}
\label{eq_g}
	g\left(r_{n,l,j},\boldsymbol{\bar{s}}_n,\balfa{0}{}\right) = \sum_{l\in\myL}\rho\left(\boldsymbol{r}_{n,l},\balfa{1}{l},\I_n'\right)\,\bar{s}_{n,l}\ \forall n\in\N.
\end{equation}
In \eqref{eq_g}, $\rho\left(\boldsymbol{r}_{n,l},\balfa{1}{l},\I_n'\right)$ is a lookup table function that selects the achievable data rate of \note{ue}{n}, on \note{prb}{l}, based on the muting decision $\balfa{1}{l}$ of the strongest interfering \acp{bs} of \note{ue}{n}, as given by $\I_n'$. The lookup table function, $\rho\left(\cdot\right)$, selects the achievable data rate from the $J'\times 1$ vector, $\boldsymbol{r}_{n,l}$, with elements $r_{n,l,j},~\forall j\in\J'$, obtained from the \mcsi{11} reports of \note{ue}{n}, on \note{prb}{l}, as defined in \eqref{eq_r_gen_2}.
\begin{ex}
Based on the \tablename~\ref{tab_csi_report} from Example~\ref{ex_csi}, with ${J'=4}$ \mcsi{11} reports, the lookup table function for \note{ue}{n\in\N}, on \note{prb}{l\in\myL}, provides the results as in \tablename~\ref{tab_rho}. Note that the value of $\rho\left(\cdot\right)$ does not depend on the muting decision of the remaining \acp{bs}.
\end{ex}

Due to the utilization of the $J'$ \mcsi{11} reports, the achievable data rate of \note{ue}{n\in\N}, on \note{prb}{l\in\myL}, under interference scenario $j\in\J'$, is constant in the problem formulation and limited to the set of reported muting patterns. Additionally, taking into account the above introduced lookup table function, $\rho\left(\cdot\right)$, and assuming that the achievable data rate function $f\left(\gamma_{n,l}\right)$, as defined in~\eqref{eq_r_gen}, is piece-wise non-decreasing, the following Proposition~\ref{lem_muting} applies.

\begin{prop}\label{lem_muting}
For \note{ue}{n\in\N}, on \note{prb}{l\in\myL}, if ${\J_{n,i}\subsetneq\J_{n,j}, \forall i,j\in\J', i\neq j}$ and $r_{n,l,i} = r_{n,l,j}$, then the interference scenario $i\in\J'$ provides the highest sum of \ac{pf} metrics over all \acp{ue} in the cooperation cluster, among both scenarios~$i,j\in\J'$.
\end{prop}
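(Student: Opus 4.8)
The plan is to exploit the \emph{separability} of problem~\eqref{eq_imultilp} across \acp{prb} and then reduce the claim to a single local exchange of the muting status of the ``extra'' interferers. First I would observe that, since $r_n=\sum_{l\in\myL}\rho(\boldsymbol{r}_{n,l},\balfa{1}{l},\I_n')\,\bar{s}_{n,l}$ by \eqref{eq_multil_total_r}--\eqref{eq_g}, while the linking constraints \eqref{eq_multil_valid_alpha} couple the variables only within a common \note{prb}{l}, the objective $\sum_{n\in\N}\Omega_n=\sum_{n\in\N}R_n^{-1}r_n$ decomposes as $\sum_{l\in\myL}\big(\sum_{n\in\N}R_n^{-1}\rho(\boldsymbol{r}_{n,l},\balfa{1}{l},\I_n')\,\bar{s}_{n,l}\big)$, so each \note{prb}{l} is optimized independently. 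It then suffices to fix the \note{prb}{l} on which \note{ue}{n} is served (its scenario being immaterial otherwise) and compare the two admissible muting vectors $\balfa{1}{l}$ realizing scenarios $i$ and $j$.

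By the convention that non-strongest interferers are never muted and by the definition of the muting patterns in \eqref{eq_alpha}, these two vectors coincide on every \ac{bs} except on $D=\J_{n,j}\setminus\J_{n,i}\subseteq\I_n'$, which is muted under scenario $j$ and transmitting under scenario $i$. The second step is to establish that the contribution of \note{ue}{n} itself is invariant under this exchange: the lookup in \eqref{eq_g} returns $r_{n,l,j}$ under scenario $j$ and $r_{n,l,i}$ under scenario $i$, and these are equal by hypothesis, so $R_n^{-1}\rho(\cdot)\,\bar{s}_{n,l}$, and hence $\Omega_n$, does not change. What remains is to control the effect on the other \acp{ue}.

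The third step is an exchange argument on the scheduling constraint \eqref{eq_multil_valid_alpha}. Under scenario $j$, setting $\bar{\alpha}_{m,l}=1$ for $m\in D$ forbids any \ac{ue} connected to a \ac{bs} in $D$ from being scheduled on \note{prb}{l}; under scenario $i$ these \acp{ue} become admissible. I would start from an optimal schedule for scenario $j$, note that it stays feasible for scenario $i$, keep the (unchanged) contribution of \note{ue}{n}, and then activate the now-unmuted \acp{bs} in $D$, whose \ac{pf} metrics are non-negative. This construction aims to exhibit a scenario-$i$ schedule whose objective is at least that of the scenario-$j$ optimum, which would give the claim.

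The main obstacle is the interference coupling hidden in this last step. Un-muting $D$ can only \emph{lower} the rate of any other scheduled \note{ue}{n'} for which $D\cap\I_{n'}'\neq\varnothing$, by Proposition~\ref{lem_sinr} and Corollary~\ref{lem_rate}, so naively reusing the scenario-$j$ schedule may cost some $\Omega_{n'}$ and the inequality is not immediate. The crux is to argue that this loss is never decisive: any rate that \note{ue}{n'} gains from $D$ being muted should be attributed to $n'$ itself requesting those \acp{bs} in its \emph{own} (larger) muting set, i.e.\ to $n'$ being placed in a more-muting scenario, rather than to the scenario selected for \note{ue}{n}. Isolating the comparison to the scenario of \note{ue}{n} then confines the net effect to \note{ue}{n} (unchanged) and to the \acp{ue} served by $D$ (non-negative gain), so scenario $i$ weakly dominates scenario $j$. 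I expect the appendix to spend most of its effort making this decoupling precise.
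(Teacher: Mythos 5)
Your decomposition is the same one the paper uses in Appendix~\ref{app_lem_muting}: fix \note{prb}{l}, split the sum of \ac{pf} metrics into the term of \note{ue}{n} (unchanged because $r_{n,l,i}=r_{n,l,j}$), the terms of the \acp{ue} attached to the muted \acp{bs} (which vanish), and the rest; then observe that un-muting $D=\J_{n,j}\setminus\J_{n,i}$ lets each \ac{bs} in $D$ schedule a \ac{ue} with non-negative (the paper assumes strictly positive) \ac{pf} metric. Up to your third step you and the appendix coincide.

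The divergence is in your final step. You correctly identify that un-muting $D$ can lower the rate of any other scheduled \note{ue}{n'} with $D\cap\I_{n'}'\neq\varnothing$, and you defer the resolution of this coupling to the appendix. The appendix does not resolve it: in \eqref{eq_lem_muting} and \eqref{eq_lem_rate_ij} the last summand is written as $\sum_{\hat{n}\in\N\backslash\{n,\,\N_{n,y}\}}\Omega_{\hat{n},l}$ with $\Omega_{\hat{n},l}$ carrying no dependence on the scenario $y$, i.e., the paper silently assumes that the \ac{pf} metrics of the remaining \acp{ue} are unaffected by whether the \acp{bs} in $D$ transmit. Under the lookup-table rate model of \eqref{eq_g} this holds only if no \ac{bs} in $D$ is a strongest interferer of any remaining scheduled \ac{ue}; otherwise a \note{ue}{\hat{n}} that exploited the muting of some $m\in D$ in its own interference scenario under $j$ must fall back to a lower-rate scenario under $i$, and the conclusion \eqref{eq_lem_rate_proof} does not follow from the set inclusion $\N\backslash\{n,\N_{n,j}\}\subsetneq\N\backslash\{n,\N_{n,i}\}$ alone (a four-cell configuration in which the loss at $\hat{n}$ exceeds the gain from activating $D$ defeats the bare counting argument). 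So your proposal is not missing a step that the paper supplies; it makes explicit an assumption the paper leaves implicit. To close the argument one must either add that decoupling hypothesis, or restrict the comparison to completions in which the interference scenarios of the other \acp{ue} are held fixed and remain feasible under both muting decisions.
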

\begin{proof}
See Appendix \ref{app_lem_muting}.
\end{proof}

Hence, from Proposition \ref{lem_muting}, it follows that additional (strongest) interfering \acp{bs} are only muted if the achievable data rate of \note{ue}{n\in\N} is increased.

\begin{table}[!t]
\renewcommand{\arraystretch}{1.3}
\caption{Lookup table function $\rho\left(\boldsymbol{r}_{n,l},\balfa{1}{l},\I_n'\right)$ for \note{ue}{n}, on \note{prb}{l}, with $M'=2$}
\label{tab_rho}
\centering
\begin{tabular}{cc}
\hline
$\balfa{1}{m,l},~\forall m\in\I_n'$ & $\rho\left(\boldsymbol{r}_{n,l},\balfa{1}{l},\I_n'\right)$\\
\hline
$\left[0, 0\right]$ & $r_{n,l,1}$\\
$\left[0, 1\right]$ & $r_{n,l,4}$\\
$\left[1, 0\right]$ & $r_{n,l,3}$\\
$\left[1, 1\right]$ & $r_{n,l,2}$\\ \hline
\end{tabular}
\end{table}

Moreover, as previously explained, the scheduling and muting matrix variables $\mathbf{\bar{S}}$ and $\boldsymbol{\bar{\alpha}}$, are binary as described by the constraints in \eqref{eq_multil_binary_S} and \eqref{eq_multil_binary_alpha}, respectively.

The following remarks summarize the characteristics of the \ac{lte}-Advanced \ac{cs} with muting problem formulation in \eqref{eq_imultilp}.
\begin{itemize}
	\item As mentioned in Section \ref{sec_csi_rep}, given $M'$ strongest interfering \acp{bs} per each \note{ue}{n\in\N}, a total of ${J'=2^{M'}}$ interfering scenarios per \note{ue}{n} are available. Hence, two special cases of the problem formulation are observed:	
		\item[\textit{i)}] If $M'=0$, each \note{ue}{n} generates one \mcsi{11} report under the assumption of no \ac{bs} muting. At the central controller, the \ac{cs} with muting problem formulation becomes a \ac{pf} scheduler without any cooperation.
		\item[\textit{ii)}] If $M'=M-1$, all the interfering \acp{bs} within the cooperation cluster can be muted to improve the performance of any \ac{ue}, on each \note{prb}{l\in\myL}. If the network size is large, finding the solution while assuming cooperation of all interfering \acp{bs} for all \acp{ue} approximates an exhaustive search.
	\item The problem is purely integer, and furthermore binary because of the constraints in \eqref{eq_multil_binary_S} and \eqref{eq_multil_binary_alpha}.	
	\item Due to the combinatorial nature of the problem formulation, it is classified as non-deterministic polynomial-time (NP)-hard.
	\item The problem is non-linear because of the relation between the muting and the scheduling decision variables, $\balfa{1}{l}$ and $\bar{s}_{n,l}$, respectively, in the constraints in \eqref{eq_multil_total_r}.
\end{itemize}

Although the number of reported interference scenarios $J'=2^{M'}$ can be limited by selecting a small value $M'$ of (strongest) interfering \acp{bs} per \note{ue}{n\in\N}, the \ac{cs} with muting \ac{inlp} formulation in \eqref{eq_imultilp} also depends on the number of \acp{ue}, i.e., $N$, and the number of \acp{prb}, denoted by $L$. For certain network scenarios, $N$ and $L$ can be large. Therefore, given the non-linear nature of the problem in \eqref{eq_imultilp}, finding a solution with commercial solvers may either not be possible or inefficient in terms of computation time. In the following, \emph{separability}, \emph{reducibility} and \emph{lifting} concepts are used, in order to formulate parallel \ac{ilp} sub-problems that scale better with the network size.

\subsection{Proposed \ac{ilp} - Parallelized sub-problem formulation}\label{subsec_parallel_ilp}
\emph{Separability.} When analyzing the objective function described by \eqref{eq_multil_obj}, the total \ac{pf} metric corresponds to the sum of the individual \ac{pf} metrics for all \acp{ue}. Furthermore, at each \note{ue}{n\in\N}, it is assumed that the total instantaneous achievable data rate is equivalent to the linear combination of the decoupled achievable data rates per scheduled \acp{prb}, as given by \eqref{eq_multil_total_r}. Therefore, it is possible to separate the \ac{cs} with muting problem in \eqref{eq_imultilp}, into $L$ independent sub-problems, corresponding to the scheduling decision of one \ac{prb} each. By performing this parallelization, the computation time is reduced without affecting the quality of the solution, i.e., the solution of the parallelized \ac{cs} with muting problem remains optimal.

\emph{Reducibility.} It is expected that some of the \acp{ue} connected to a common \note{bs}{m\in\M}, share one or more strongest interfering \acp{bs}. From a \ac{bs} perspective, the set
\begin{equation}
\label{eq_unique_int}
	\J_m=\underset{n\in\N~|~c_{n,m}=1}{\cup}\J_n \ \forall m\in\M,
\end{equation}
contains all the unique muting indicator sets, associated to its connected \acp{ue}. Similar to the set $\J_n$, $\J_m$ is common to all \acp{prb} in the reporting period. The number of unique muting indicator sets for \note{bs}{m}, i.e., $J_m'=|\J_m|$, depends on the number of \acp{ue} connected to \note{bs}{m} and the maximum number $J'$ of reported interference scenarios per \ac{ue}, as introduced in Section~\ref{sec_csi_rep}. Thus, ${J'\leq J_m'\leq\sum_{n\in\N}c_{n,m}\,J'}$, where the lower bound corresponds to the case when all connected \acp{ue} are interfered by the same set of strongest interfering \acp{bs}, and the upper bound represents the case with all \acp{ue} having different strongest interfering \acp{bs}. For the unique muting indicator set $\J_{m,j'}$, with $j'\in\J_m'=\{1,\dots,J_m'\}$, the set of indexes of \acp{ue}, connected to \note{bs}{m}, with equal muting indicator set is defined as
\begin{equation}
\label{eq_ue_int}
	\N_{m,j'} = \{n\in\N~|~c_{n,m}=1,\J_{m,j'}\subsetneq\J_n,~\forall m\in\M, \forall j'\in\J_m'\}.
\end{equation}
Based on the definitions in \eqref{eq_unique_int} and \eqref{eq_ue_int}, the following Proposition~\ref{lem_reducibility} is given.

\begin{prop}\label{lem_reducibility}
For \note{bs}{m\in\M}, with unique muting indicator set index given by ${j'\in\J_m'}$, and the set $\N_{m,j'}$ as introduced in \eqref{eq_ue_int}. If $|\N_{m,j'}|>1$, then the optimal contribution of \note{bs}{m}, on \note{prb}{l\in\myL}, to the total \ac{pf} metric in \eqref{eq_multil_obj}, corresponds to the \ac{pf} metric of \note{ue}{\hat{n}}, with
\begin{equation}
\label{eq_lem_red}
	\hat{n} = \underset{n\in\N_{m,j'}}{\arg\max}\, \Omega_{n,l,j} \ \forall j\in\J'~|~\J_{n,j}=\J_{m,j'}.
\end{equation}
\end{prop}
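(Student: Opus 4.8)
The plan is to prove the statement by a short exchange (local optimality) argument, after first exploiting the separability of the problem across \acp{prb}. I would begin by invoking the separability property established at the start of Section~\ref{subsec_parallel_ilp}, which lets me restrict attention to the single sub-problem associated with \note{prb}{l} and rewrite the objective \eqref{eq_multil_obj} as $\sum_{n\in\N}\Omega_{n,l}\,\bar{s}_{n,l}$, where $\Omega_{n,l}=\rho(\boldsymbol{r}_{n,l},\balfa{1}{l},\I_n')/R_n$ is the per-\ac{prb} contribution of \note{ue}{n}. Since the \acp{ue} in $\N_{m,j'}$ are served by \note{bs}{m} and can only be scheduled when \note{bs}{m} is active, I would fix $\bar{\alpha}_{m,l}=0$; the linking constraint \eqref{eq_multil_valid_alpha} then forces $\sum_{n\in\N}c_{n,m}\,\bar{s}_{n,l}\leq1$, so \note{bs}{m} schedules at most one of its connected \acp{ue} on \note{prb}{l}.

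Next I would fix the muting vector $\balfa{1}{l}$ to any assignment that realizes the interference scenario $\J_{m,j'}$ for the \acp{ue} in $\N_{m,j'}$. By the definition \eqref{eq_ue_int} this scenario is common to all of them, so a single muting decision simultaneously induces the rate $r_{n,l,j}$ with $\J_{n,j}=\J_{m,j'}$, and hence the metric $\Omega_{n,l,j}=r_{n,l,j}/R_n$, for every $n\in\N_{m,j'}$. The decisive observation is a decoupling one: the interference experienced by any \ac{ue} of a different \ac{bs} is governed, through \eqref{eq_sinr_icc}, solely by the muting vector $\balfa{1}{l}$ and not by which particular \ac{ue} \note{bs}{m} serves; moreover, since each $n\in\N_{m,j'}$ has $c_{n,m'}=0$ for all $m'\neq m$, toggling $\bar{s}_{n,l}$ among these \acp{ue} affects neither the constraints \eqref{eq_multil_valid_alpha} of the remaining \acp{bs} nor the achievable rate of any other \ac{ue}.

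With this decoupling in hand the exchange argument is immediate. Take any optimal scheduling of \note{prb}{l} in which \note{bs}{m} serves some $n_0\in\N_{m,j'}$, and replace $n_0$ by $\hat{n}=\arg\max_{n\in\N_{m,j'}}\Omega_{n,l,j}$. Feasibility is preserved, and every term of the objective except the single contribution of \note{bs}{m} is unchanged, while that contribution can only increase, from $\Omega_{n_0,l,j}$ to $\Omega_{\hat{n},l,j}$. Hence no \ac{ue} in $\N_{m,j'}$ other than $\hat{n}$ can belong to a strictly better solution, which is precisely the claimed optimal contribution $\Omega_{\hat{n},l,j}$.

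I expect the main obstacle to be the careful verification of the decoupling step rather than the exchange itself: one must confirm that neither the linking constraints \eqref{eq_multil_valid_alpha} of neighboring \acp{bs} nor the rates $\rho(\cdot)$ of \acp{ue} served elsewhere depend on the intra-$\N_{m,j'}$ scheduling choice, and that a common muting vector does realize $\J_{m,j'}$ for all candidates at once. Once this independence is secured, selecting the maximizer $\hat{n}$ is optimal, and the result directly yields the reducibility that collapses the $|\N_{m,j'}|$ candidate \acp{ue} into a single representative.
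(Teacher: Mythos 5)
Your proof is correct and follows essentially the same route as the paper's: both reduce to the observation that the choice of which \ac{ue} in $\N_{m,j'}$ \note{bs}{m} schedules affects only a single term of the per-\ac{prb} objective, so the maximizer $\hat{n}$ of $\Omega_{n,l,j}$ is optimal. The only difference is presentational --- the paper argues via a two-\ac{ue} case comparison and leaves the decoupling implicit, whereas you spell out the separability and the exchange step explicitly.
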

\begin{proof}
See Appendix \ref{app_lem_reducibility}.
\end{proof}

Based on Proposition~\ref{lem_reducibility}, it is sufficient that each \note{bs}{m\in\M} forwards to the central controller, the \mcsi{11} reports related to one \ac{ue} per unique muting indicator set $\J_{m,j'}, \forall j'\in\J_m'$, on \note{prb}{l\in\myL}, instead of the \mcsi{11} reports from all connected \acp{ue}. The set of indexes of \acp{ue} connected to \note{bs}{m} that maximize the \ac{pf} metric, in at least one of the unique muting indicators sets indexed by~$j'\in\J_m'$, on \note{prb}{l}, is defined as 
\begin{equation}
\label{eq_red_N}
	\N_{m,l}'=\{\hat{n}~|~\exists j: \hat{n}=\underset{n\in\N_{m,j'}}{\arg\max}\,\Omega_{n,l,j},~\forall j'\in\J_m',\forall j\in\J'~|~\J_{n,j}=\J_{m,j'}\}.
\end{equation}
The cardinality of the set $\N_{m,l}'$, is bounded as ${1\leq|\N_{m,l}'|\leq\sum_{n\in\N}c_{n,m}}$, where the lower bound implies that only one \ac{ue} provides the maximum \ac{pf} metric, among all unique muting indicator sets on \note{prb}{l}, and the upper bound corresponds to the case when each \ac{ue} reports different muting indicator sets with respect to the other \acp{ue} connected to \note{bs}{m}.

At the central controller, all the achievable data rates, $r_{n,l,j}, \forall n\in\N_{m,l}'$, $\forall m\in\M, \forall l\in\myL, \forall j\in\J'$, are per definition set to zero, for the interference scenarios where \note{ue}{n} does not provide the maximum \ac{pf} metric, among the \acp{ue} connected to the same \note{bs}{m}. The set $\N_l'=\underset{\mathclap{m\in\M}}{\cup}\ \N_{m,l}'$ is used to denote the indexes of \acp{ue} to be considered in the reformulated \ac{ilp}, on \note{prb}{l}. The cardinality of the set $\N_l'$ is described as $M\leq|\N_l'|\leq N$. In the special case of $M'=0$, all \acp{ue} report only one interference scenario where no cooperative interfering \ac{bs} is muted, and thus, $|\N_l'|=M$.

\textit{Lifting.} In order to linearize the constraints in \eqref{eq_multil_total_r}, a variable transformation is introduced based on the lifting technique \cite{lifting_Balas}. A new coordinated decision variable is defined containing both, the scheduling and the muting decisions, as
\begin{equation}
\label{eq_coord_dec}
	s_{n,l,j}= \left\{
		\begin{aligned}
			1 & \ \text{if \ac{prb} }l\in\myL \text{ is assigned to \ac{ue} }n\in\N\\
			   & \ \text{under interference scenario }j\in\J'\\
			0 & \ \text{otherwise}.
		\end{aligned}\right.
\end{equation}
The new decision variable, $s_{n,l,j}$, is related to the muting and scheduling decisions in \eqref{eq_mute_dec} and \eqref{eq_sched_dec}, respectively, as
\begin{equation}
\label{eq_theo_s}
	s_{n,l,j} = 1 \Leftrightarrow \:\bar{s}_{n,l}=1 \land \bar{\alpha}_{m,l}=1~\forall n\in\N, \forall l\in\myL, \forall j\in\J', \forall m\in\J_{n,j},
\end{equation}
with $\land$ denoting the logical \emph{and} operator. Hence, the non-linear constraints in~\eqref{eq_multil_total_r} reduce to a linear combination of the achievable data rates, i.e., $r_{n,l,j}$, and the new decision variable, $s_{n,l,j}$.

\textit{Problem Reformulation.} Using the above described concepts of separability, reducibility and lifting, the \ac{cs} with muting \ac{inlp} formulation in \eqref{eq_imultilp} can be reformulated as an \ac{ilp}, which can be efficiently solved by commercial solvers. Hence, with the set $\N_l'$ and defining the binary decision variable $\mathbf{S}_l$ to have dimensions $|\N_l'| \times J'$, the sub-problem formulation for \note{prb}{l\in\myL} is
\begin{subequations}
\label{eq_par_ilp}
	\begin{alignat}{2}
		\underset{\{\mathbf{S}_l\}}{\max} \, &&&\sum_{n\in\N_l'} \Omega_{n,l} \label{eq_subp_obj}\\
		\text{s.t.} \, &&&\nonumber\\
		&&&\begin{aligned}
			&s_{n,l,j} + \sum_{k\in\N_l'}\sum_{i\in\J'} c_{k,m}\,s_{k,l,i} \leq 1\\
			& \quad \quad \forall n\in\N_l', \forall j\in\J', \forall m \in \J_{n,j},
		\end{aligned}\label{eq_subp_valid_j}\\
		&&&\begin{aligned}
			&\sum_{n\in\N_l'}\sum_{j\in\J'} c_{n,m}\,s_{n,l,j}\leq1\\
			& \quad \quad \forall m\in\M\backslash\cup_{n\in\N_l'}\I_n',
		\end{aligned}\label{eq_subp_m}\\
		&&&r_{n,l} = \sum_{j\in\J'} r_{n,l,j}\,s_{n,l,j} \ \forall n\in\N_l', \label{eq_subp_total_r}\\
		&&&s_{n,l,j}=0 \ \forall n\in\N_l', \forall j\in\J'~|~r_{n,l,j}=0, \label{eq_subp_set_var}\\
		&&&s_{n,l,j} \in \{0,1\} \ \forall n\in\N_l', \forall j\in\J', \label{eq_subp_binary_S}
	\end{alignat}
\end{subequations}
where the objective in \eqref{eq_subp_obj} is to maximize the sum of the \ac{pf} metric over all \acp{ue}. The constraints in \eqref{eq_subp_valid_j} restrict the scheduling decisions of the strongest interfering \acp{bs} of \note{ue}{n\in\N_l'}, i.e., $\forall m\in\J_{n,j}$, in order to agree with the muting state considered in the interference scenario $j\in\J'$. If \note{prb}{l} is assigned to \note{ue}{n}, under the condition of muting the (strongest) interfering \acp{bs} indexed by the set $\J_{n,j}\in\J_n$, then no other \ac{ue} connected to the muted \acp{bs} can be simultaneously scheduled on the same \note{prb}{l}. Thus, if $s_{n,l,j}=1$ in \eqref{eq_subp_valid_j}, the second term on the left-hand-side must be equal to zero. Furthermore, in the case that $s_{n,l,j}=0$, the constraints in \eqref{eq_subp_valid_j} ensure that single user transmissions are carried out, where each \note{bs}{m\in\J_{n,j}} is allowed to schedule a maximum of one \ac{ue} per \ac{prb}, over all possible interference scenarios $j\in\J'$. Since it is possible that specific \acp{bs}, within the cooperation cluster, do not belong to the set of strongest interfering \acp{bs} of any \ac{ue}, the constraints in \eqref{eq_subp_m} complement the restriction on the single user transmissions from \eqref{eq_subp_valid_j}. Additionally, the total instantaneous achievable data rate of \note{ue}{n}, on \note{prb}{l}, denoted by $r_{n,l}$, is calculated in \eqref{eq_subp_total_r} as the achievable data rate for the selected interference scenario~$j$, as defined by the coordinated decision variable $s_{n,l,j}$. It is worth to notice that there is a one-to-one mapping between $r_{n,l,j}$ and $s_{n,l,j}$, thus, there is no requirement for a lookup table function as used in \eqref{eq_g}. Furthermore, the constraints in \eqref{eq_subp_set_var} are incorporated as a preprocessing step to ensure that no \ac{prb} is scheduled to \acp{ue} for which a maximum \ac{pf} metric for the corresponding interference scenario $j$ is not available. Finally, the coordinated decision variable $\mathbf{S}_l$ is binary as described by the constraints in \eqref{eq_subp_binary_S}.

It can be easily proven that the problem formulations in \eqref{eq_imultilp} and \eqref{eq_par_ilp} are equivalent. Furthermore, the proposed parallelized formulation in \eqref{eq_par_ilp}, reduces significantly the \ac{cs} with muting problem complexity, allowing its application even for large-size networks as illustrated in Section~\ref{sec_results}.

\subsection{Generalized greedy heuristic algorithm}\label{subsec_gen_heur}
The greedy heuristic deflation algorithm in \cite{Nokia_2015} (see algorithm in Section II), iteratively solves the \ac{cs} with muting problem per \note{prb}{l\in\myL}, where at each iteration, one \ac{bs} is muted, corresponding to the \note{bs}{m\in\M} that, when muted, maximizes the sum of the \ac{pf} metrics among all \acp{ue} on \note{prb}{l}. The algorithm stops when muting any additional \ac{bs} does not improve the sum of the \ac{pf} metrics with respect to the previous iteration. There is no guarantee that the heuristic algorithm yields a globally optimal point, because the quality of the scheduling decision depends directly on the gain achieved from muting one interfering \ac{bs} at a time.

Given the above mentioned disadvantage of the \ac{cs} with muting greedy heuristic algorithm from \cite{Nokia_2015}, an extension is proposed in this work, called \emph{generalized greedy heuristic algorithm}, which trades off computational complexity with performance gains. The main difference with respect to the algorithm in \cite{Nokia_2015}, is the evaluation of additional muting patterns per iteration, where for \note{prb}{l\in\myL}, the set of muting indicators
\begin{equation}
	\label{eq_muting_pos}
		\hat{\M}=\bigcup_{\hat{m}\in\{1,\dots,\tilde{m}\}}{\binom{\M'}{\hat{m}}},
	\end{equation}
defines the muting patterns to be evaluated. In \eqref{eq_muting_pos}, the binomial coefficients of the set $\M'$, of possible muted \acp{bs}, are evaluated by selecting $\hat{m}$ \acp{bs} at a time. The configuration parameter ${1\leq\tilde{m}\leq M-1}$, controls the complexity of the proposed generalized greedy heuristic algorithm by determining the muting patterns to be evaluated. If $\tilde{m}=1$, the generalized greedy heuristic algorithm reduces to the heuristic algorithm from \cite{Nokia_2015}. In the case that $\tilde{m}=M-1$, the generalized greedy heuristic algorithm performs an exhaustive search.

\section{Simulation results}\label{sec_results}
In this section, extensive simulation results are presented to evaluate the performance of the \ac{comp} \ac{cs} schemes with respect to a \ac{pf} scheduler without any cooperation, referred to as ``non-coop. PFS". The proposed parallelized sub-problem formulation as presented in Section \ref{subsec_parallel_ilp}, labeled as ``CS-ILP", is examined, together with the greedy algorithm described in \cite{Nokia_2015}, denoted as ``CS-GA", and the proposed generalized greedy algorithm of Section \ref{subsec_gen_heur}, labeled as ``CS-GG". In the simulations, ${M'=2}$ strongest interfering \acp{bs} per \ac{ue} are considered.

\subsection{\ac{cs} with muting - Performance analysis}\label{subsec_ana_results}
In order to study the performance of the \ac{cs} with muting schemes, Monte Carlo standalone simulations have been carried out, where the \mcsi{11} reports are generated based on channels obtained from a \ac{3gpp} compliant system level simulator, as specified in \cite{3GPP_TR_25.996,3GPP_TR_36.814,3GPP_TR_36.872,3GPP_TR_36.874}. In each transmission time interval $t$, the average user throughput over time of \note{ue}{n\in\N}, used in \eqref{eq_pfs}, is updated based on the scheduling decisions made at the previous transmission time interval $t-1$, as
\begin{equation}
\label{eq_R}
	R_n(t) = \beta\,R_n(t-1)+ (1-\beta)\,r_n(t-1),
\end{equation}
with $\beta=0.97$, denoting the forgetting factor parameter used to trade-off user throughput and fairness \cite{average_R_Motorola}. The total instantaneous achievable data rate of \note{ue}{n}, at the previous transmission time interval, denoted by $r_n(t-1)$, is calculated as given by e.g., \eqref{eq_subp_total_r}.

Initially, the performance of the \ac{cs} with muting  algorithms, in terms of average user throughput, is studied with respect to the data rates the users can achieve per symbol and to the noise power level considered in the calculation of these achievable data rates. In practical systems such as \ac{lte}-Advanced, finite \acp{mcs} are used which restrict the achievable data rates per symbol to a given range \cite{Sesia_LTE,CQI_table_Yong}. For the current analysis, two cases are considered with respect to the maximum achievable data rate: \textit{i)} the \ac{mcs} is unbounded, denoted as ``Unb. MCS", where the maximum achievable data rate can approach arbitrarily large values, and \textit{ii)} a maximum achievable data rate of \SI{5.4}{bits/symbol} is used, as imposed by a typical highest \ac{mcs} bound in \ac{lte}-Advanced, referred to as ``B. MCS". Similarly, there are two assumptions with respect to the noise power level, where in a first case, noise free decoding is assumed, denoted as ``N.-less", which considers that $\sigma^2=\epsilon$, with $\epsilon$ arbitrarily small but larger than zero, and in a second, a typical receiver noise figure of \SI{9}{dB} is considered, referred to as the ``Noisy" case.

The cell-edge and the geometric mean of the user throughput are shown in \figurename s~\ref{fig_ce_thr_sens} and \ref{fig_geomean_thr_sens}, respectively, for a scenario with $M=3$ \acp{bs}, $N=30$ \acp{ue} (10 \acp{ue} per \ac{bs}) and $L=10$ \acp{prb}. The cell-edge throughput describes the average user throughput of the cell-edge users and corresponds to the average throughput achieved by the worst \SI{5}{\%} of the users. The usage of the geometric mean is proposed by the authors in \cite{Nokia_2015} as a direct measure of the \ac{pf} scheduler's objective function. The user throughputs achieved by the \ac{cs} with muting schemes, i.e., CS-ILP, CS-GA and CS-GG, are normalized by the resulting user throughput when no cooperative scheduler is applied, i.e., non-coop. PFS. Four cases are considered for different combinations of maximum achievable data rate and noise power level, as specified in the horizontal axis. No additional \acp{bs} are considered in the network, hence, there is no out-of-cluster interference, i.e., $I_{n,l}^{\text{oc}}=0$. It is observed that under no achievable data rate limitations, and noise free receivers, i.e., Unb. MCS and N.-less, significant user throughput gains for both, the cell-edge and the geometric mean, are achieved by the cooperative schemes, with respect to the non-coop. PFS. Moreover, the optimality of the proposed CS-ILP formulation is notable, with the CS-GA being unable to obtain the optimal solution as explained in Section \ref{subsec_gen_heur}. Due to the unboundedness of the \ac{mcs} and the noise free decoder assumptions in this case, simultaneously muting the two interfering \acp{bs} can significantly increase the \ac{ue}'s data rate. Nevertheless, only muting one interfering \ac{bs} does not yield sufficient \ac{pf} metric gain, causing the CS-GA scheme to stop prematurely. Such a limitation of the CS-GA is not present in the proposed CS-GG, which achieves the same optimal performance as the CS-ILP scheme. Once limitations are assumed in either the maximum achievable data rate, or the noise power level, or both, the observed gains from the \ac{cs} with muting schemes, with respect to the non-coop. PFS approach, vanish. Due to the low number of \acp{bs} in the cooperation cluster and given the above mentioned limitations, few users benefit from the simultaneous muting of the two interfering \acp{bs}. Thus, a greedy algorithm performs almost optimal under such practical network assumptions. The average percentage of muted \acp{prb} per \ac{bs}, for the four different scheduling schemes and four combinations of maximum achievable data rate and noise power level, is presented in Table~\ref{tab_muted_prb}. The non-coop.~PFS does not apply muting, therefore the table contains zero entries for all cases. For the \ac{cs} with muting schemes, according to \figurename s~\ref{fig_ce_thr_sens} and \ref{fig_geomean_thr_sens}, the average percentage of  muted resources per \ac{bs} reduces when the gain of muting is restricted. It is worth to notice that even when the maximum achievable data rate is assumed to be unbounded above, i.e., Unb. MCS, and noiseless receivers are considered, i.e., N.-less, the CS-ILP scheme mutes $2/3$ of the resources per \ac{bs}, which means that each \ac{bs} orthogonally schedules its \acp{ue} over $1/M\text{-\emph{th}}$ of the available resources. Further muting resources per \ac{bs}, reduces the network performance because the user throughput distribution lacks fairness among the \acp{bs}. The value $1/M$, represents a fundamental limit of the cooperation and agrees with analytical studies presented by Lozano et al. in \cite{Comp_limit_Lozano}. Although the performance of the heuristic \ac{cs} with muting schemes is close-to-optimal under current practical network conditions, it is envisioned that the evolution of mobile communications introduces for future networks receivers with enhanced capabilities to suppress noise and to support the usage of higher \acp{mcs}. Hence, the results in \figurename s~\ref{fig_ce_thr_sens} and \ref{fig_geomean_thr_sens}, provide a reference to the potential gains of these heuristic schemes with respect to the optimal performance obtained with the proposed CS-ILP.

\begin{figure}[!t]
\centering
\includegraphics{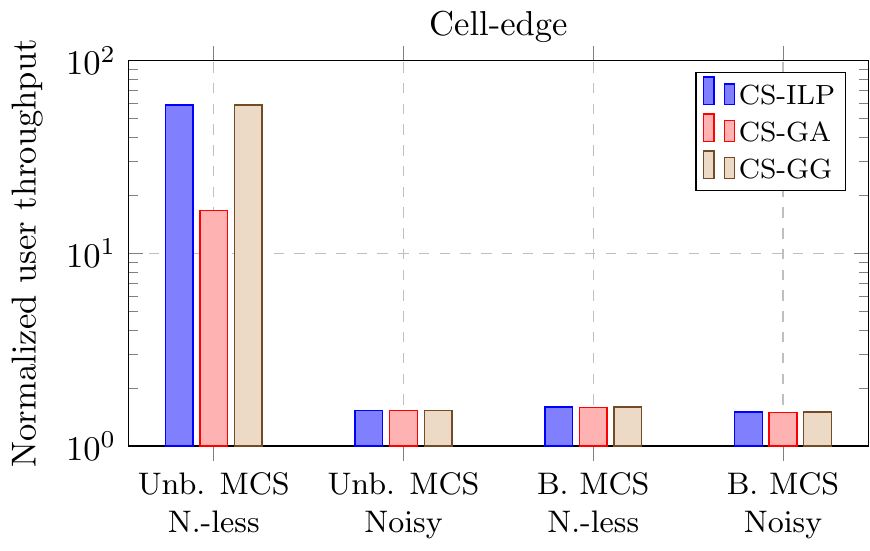}
\caption{Average cell-edge user throughput for the \ac{comp} \ac{cs} schemes, normalized with respect to the non-coop. PFS. Scenario with $M=3$ \acp{bs}, $N=30$ \acp{ue}, $L=10$ \acp{prb} and $M'=2$ \acp{bs}. Four cases with limitations on the maximum achievable data rate and the noise power level are considered. There is no out-of-cluster interference.}
\label{fig_ce_thr_sens}
\end{figure}

\begin{figure}[!t]
\centering
\includegraphics{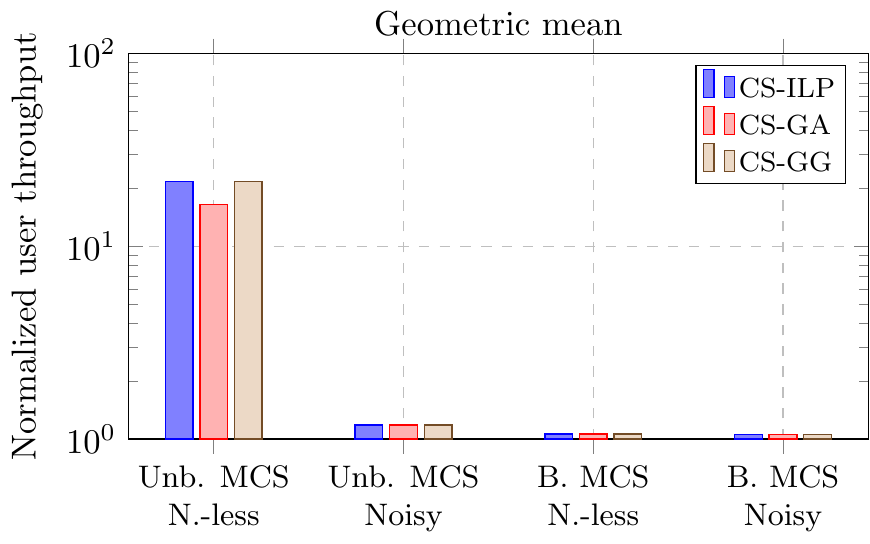}
\caption{Geometric mean of the average user throughput for the \ac{comp} \ac{cs} schemes, normalized with respect to the non-coop. PFS. Scenario with $M=3$ \acp{bs}, $N=30$ \acp{ue}, $L=10$ \acp{prb} and $M'=2$ \acp{bs}. Four cases with limitations on the maximum achievable data rate and the noise power level are considered. There is no out-of-cluster interference.}
\label{fig_geomean_thr_sens}
\end{figure}

\begin{table}[!t]
\renewcommand{\arraystretch}{1.3}
\caption{Average percentage of muted resources per \ac{bs}}
\label{tab_muted_prb}
\centering
\begin{tabular}{ccccc}
\hline
\textbf{Scheduling} & non-coop. & CS & CS & CS\\
\textbf{scheme} & PFS & ILP & GA & GG\\
\hline
Unb. MCS \& N.-less & 0 & 0.67 & 0.53 & 0.67\\
Unb. MCS \& Noisy & 0 & 0.22 & 0.21 & 0.22\\
B. MCS \& N.-less & 0 & 0.08 & 0.08 & 0.08\\
B. MCS \& Noisy & 0 & 0.08 & 0.07 & 0.08\\ \hline
\end{tabular}
\end{table}

In the following, the more practical scenario with bounded \ac{mcs} and noisy receivers, i.e., a maximum achievable data rate of \SI{5.4}{bits/symbol} and a noise figure of \SI{9}{dB}, is considered in order to study the performance of the \ac{cs} with muting schemes, with respect to the cooperation cluster size. For that purpose, a network of seven \acp{bs} is simulated, where a single cooperation cluster of variable size, with $M \in \{3,\dots,7\}$, is assumed. The \acp{bs} outside of the cooperation cluster are assumed to transmit data with maximum transmit power over the complete simulation time, i.e., ${I_{n,l}^{\text{oc}}\geq0}$. Additionally, two alternatives for the number of strongest interfering \acp{bs} per \ac{ue}, denoted by $M'$, are considered with $M'=M-1$ and $M'=2$. In the latter case, in order to have a conservative estimation of the achievable data rate, the \acp{ue} generate \mcsi{11} reports assuming maximum interference from the weakest interfering \acp{bs} as defined in~\eqref{eq_wi_int}. Each \ac{bs} serves 10 \acp{ue} over $L=10$ \acp{prb}. The cell-edge throughput, as a function of the cooperation cluster size $M$, is shown in \figurename~\ref{fig_cell_edge_Int} for the \acp{ue} served by the \acp{bs} within the cooperation cluster. The presented results are normalized with respect to the user throughput achieved by the same \acp{ue}, if the non-coop. PFS is used. In accordance to the previous results, the \ac{cs} with muting schemes provide gains with respect to a non-cooperative \ac{pf} scheduler, with an increase in the gains for a larger cooperation cluster size. The reason for such an improvement is the opportunity of further reducing the interference, and thus enhancing the \ac{sinr}, by increasing the amount of \acp{bs} involved in the coordinated scheduling procedures. It is also observable that a larger number of $M'$ strongest interfering \acp{bs} per \ac{ue} improves the gains of the \ac{cs} with muting schemes, at the cost of additional computational complexity and signaling overhead. In agreement with the results presented in \figurename s~\ref{fig_ce_thr_sens} and~\ref{fig_geomean_thr_sens}, the greedy algorithm of \cite{Nokia_2015} shows a close-to-optimal, i.e., close to CS-ILP, performance under practical conditions, with the proposed CS-GG algorithm performing better than the CS-GA scheme when all possible strongest interfering \acp{bs} are considered. Similar results were observed for the geometric mean of the user throughput.

\begin{figure}[!t]
\centering
\includegraphics{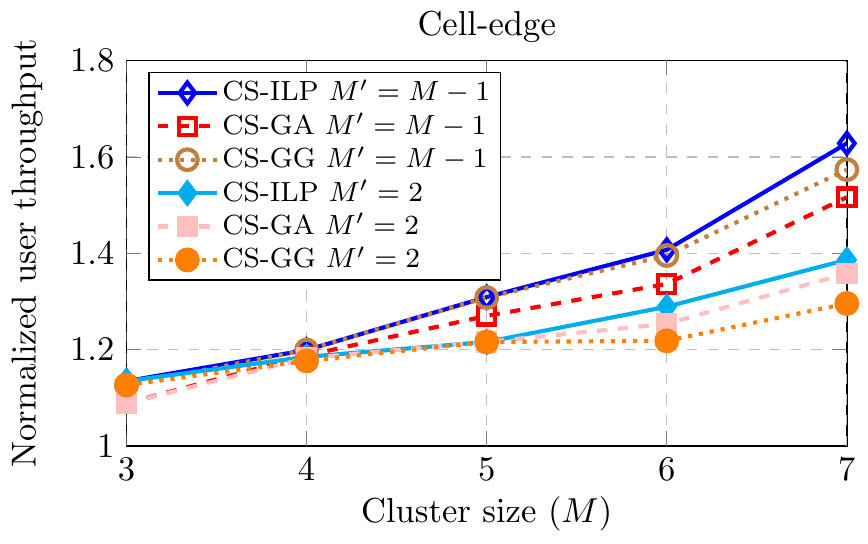}
\caption{Average cell-edge user throughput, normalized with respect to the non-coop. PFS, for different cooperation cluster sizes $(M)$. Scenario with ${N=10M}$ \acp{ue}, $L=10$ \acp{prb} and $M'=\{2,M-1\}$ \acp{bs}. There is out-of-cluster interference.}
\label{fig_cell_edge_Int}
\end{figure}

\subsection{\ac{cs} with muting - Potential gains}\label{subsec_ana_sls}
In this section, system level simulation results are presented in order to demonstrate the achievable gains of the \ac{cs} with muting schemes for \ac{lte}-Advanced macro-only and heterogeneous networks in an urban deployment. In both cases, $N=630$ \acp{ue} are served over $L=10$ \acp{prb}, by $M=21$ \acp{bs} in the macro-only network and $M=42$ \acp{bs} in the heterogeneous case where, one pico cell is located within the coverage area of a macro \ac{bs} with a separation distance of \SI{125}{m} from the macro \ac{bs}. The \acp{ue} are uniformly distributed in the macro-only case, while in the heterogeneous network the \acp{ue} are located in a hotspot fashion, where $2/3$ of the \acp{ue} are deployed in the vicinity of the pico \acp{bs}. As explained in Section \ref{sec_sys_mod}, in the heterogeneous networks cell range expansion is used with a \ac{sinr} off-set of \SI{6}{dB} for the small cells. The out-of-cluster interference is modeled using the wrap-around technique \cite{wrap_around_Yoon}, where additional \acp{bs} are deployed surrounding the $M$ \acp{bs} of interest. Additionally, \mcsi{11} reporting with periodicity of \SI{5}{ms} is applied where, similar to the simulations in Section~\ref{subsec_ana_results}, a conservative estimation of the achievable data rates is calculated by assuming maximum interference from the remaining \acp{bs}. Full buffer conditions, ideal link adaptation and rank one transmissions are assumed, i.e., all users are always active and demand as much data as possible, there are no decoding errors and only transmit beamforming is applied, respectively. For more information on \ac{3gpp}-compliant system level simulations, including channel and path-loss models, the interested reader is referred to \cite{3GPP_TR_36.814} (See 3GPP Case 1 and Case 6.2 from Section A.2.1).

\begin{figure}[!t]
\centering
\includegraphics{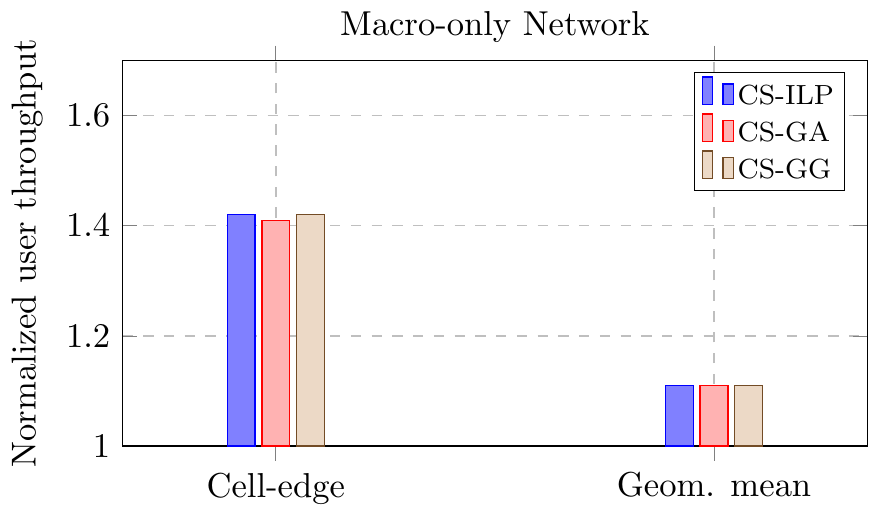}
\caption{Cell-edge and geometric mean of the average user throughput, normalized with respect to the non-coop. PFS, for a scenario with $M=21$ \acp{bs}, $N=630$ \acp{ue}, $L=10$ \acp{prb} and $M'=2$ \acp{bs}, with wrap-around technique. Results from system level simulations of a macro-only network.}
\label{fig_macro_only}
\end{figure}

\begin{figure}[!t]
\centering
\includegraphics{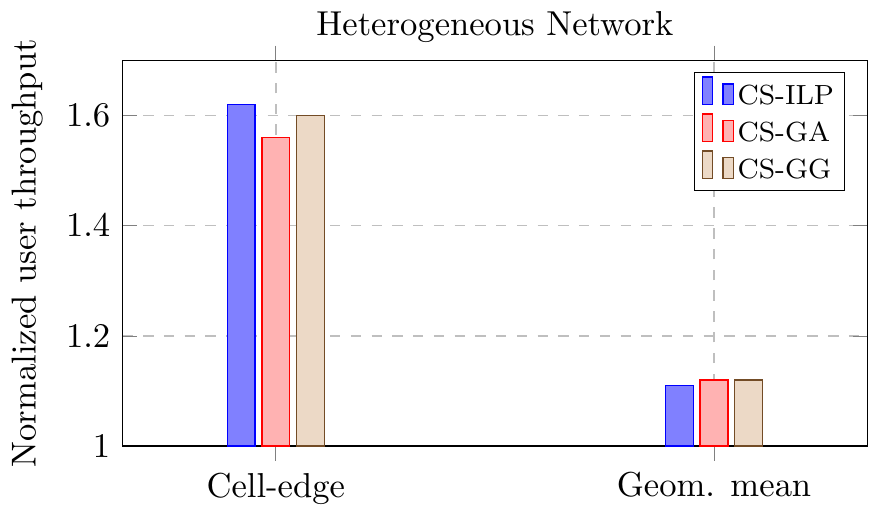}
\caption{Cell-edge and geometric mean of the average user throughput, normalized with respect to the non-coop. PFS, for a scenario with $M=42$ \acp{bs}, $N=630$ \acp{ue}, $L=10$ \acp{prb} and $M'=2$ \acp{bs}, with wrap-around technique. Results from system level simulations of a heterogeneous network.}
\label{fig_hetnet}
\end{figure}

The cell-edge and the geometric mean of the user throughput, normalized with respect to the non-coop. PFS, are presented in \figurename~\ref{fig_macro_only} for a macro-only, and in \figurename~\ref{fig_hetnet} for a heterogeneous network. In order to follow the standard \mcsi{11} reporting procedure, only $M'=2$ cooperative interfering \acp{bs} within the cooperation cluster are reported by each \ac{ue}. In terms of the geometric mean, gains are limited to values around \SI{11}{\%} for both cases, macro-only and heterogeneous networks. Additionally, the difference between the proposed schemes, i.e., CS-ILP and CS-GG, and the state-of-the-art CS-GA is negligible. For the \acp{ue} with the worst average user throughput, i.e., the cell-edge users, even with the limitation in the number of strongest interfering \acp{bs}, the \ac{cs} with muting schemes achieve a considerable gain in performance, with gains above \SI{40}{\%} being observable. In the case of heterogeneous networks, the cell-edge gain is even higher, due to the presence of a clear strongest interfering \ac{bs} for the pico \acp{ue}, i.e.,  the macro \ac{bs}, which is considered to cooperate within the restriction of $M'=2$. The proposed generalized greedy algorithm, i.e., CS-GG, performs better than the scheme in \cite{Nokia_2015}, i.e., CS-GA, which follows from the flexibility to muting additional \acp{bs}. The average percentages of muted \acp{prb} for the \ac{cs} with muting schemes in the macro-only and heterogeneous networks are presented in Table~\ref{tab_muted_prb_sls}. One implication of the muted \acp{prb} is the opportunity to save transmit power at the \acp{bs}, with the proposed CS-ILP and CS-GG schemes muting more \acp{prb} than the CS-GA scheme.

Finally, focusing on the proposed parallelized CS-ILP, it is recognizable that the simplifications proposed in Section \ref{subsec_parallel_ilp}, enable the implementation of such a \ac{cs} with muting approach even for medium to large-size networks. Hence, instead of solving the \ac{cs} with muting problem by considering the total of ${N=630}$ \acp{ue} per \note{prb}{l\in\myL}, only $|\N_l'|=136$ and ${|\N_l'|=213}$ \acp{ue} were included in average for the macro-only and the heterogeneous network, respectively. That implies a reduction of \SI{78}{\%} and \SI{66}{\%} in the problem size, for each of the cases, respectively.

\begin{table}[!t]
\renewcommand{\arraystretch}{1.3}
\caption{Average percentage of muted resources}
\label{tab_muted_prb_sls}
\centering
\begin{tabular}{cccc}
\hline
\textbf{Network} & CS-ILP & CS-GA & CS-GG\\
\hline
Macro-only & 0.11 & 0.10 & 0.10\\
Heterogeneous & 0.13 & 0.08 & 0.09\\ \hline
\end{tabular}
\end{table}

\section{Conclusions}
In this paper the coordinated scheduling with muting problem in the framework of \ac{lte}-Advanced networks with a centralized controller has been studied. A novel integer non-linear program formulation has been proposed to solve the problem optimally, where a computationally efficient equivalent integer linear program reformulation has been proposed to extend the applicability of the proposed scheme even to large-size networks.

Extensive system level simulation results show that coordinated scheduling with muting can potentially improve the cell-edge user performance, with higher gains in heterogeneous networks. Nevertheless, these gains are limited by the remaining uncoordinated interference and the finite time/frequency/space resources to be shared in the network.

The evaluation of the proposed integer linear program formulation, as well as the state-of-the-art heuristic greedy algorithm, for alternative traffic models in the non-full buffer case, are recommended for future studies. In the case of low demand, the possibility of reducing residual interference and increasing the degrees of freedom for the cooperation, can further enhance the performance gains of the mentioned coordinated scheduling schemes.

\section{Appendix}

\subsection{Proof of Proposition \ref{lem_sinr}}\label{app_lem_sinr}
Given the condition that ${\J_{n,i}\subsetneq\J_{n,j}, \forall i,j\in\J', i\neq j}$, the common strongest interfering \acp{bs} of \note{ue}{n\in\N} are considered to be muted in the interference scenarios $i$ and $j$. Thus, from the definition of the muting patterns in \eqref{eq_alpha}, ${\alpha_{n,m,l,i}=\alpha_{n,m,l,j}=1, \forall m\in\J_{n,i}}$. Furthermore, interference scenario $j$ mutes additional strongest interfering \acp{bs} in comparison to interference scenario $i$, i.e., ${\alpha_{n,m,l,i}=0,~ \alpha_{n,m,l,j}=1}$, ${\forall m\in\J_{n,j}\backslash\J_{n,i}}$. Thus, from \eqref{eq_si_int},
\begin{equation}
\label{eq_lem_sinr}
	I_{n,l,i}^{\text{si}}\left(\alpha_{n,m,l,i}\right)>I_{n,l,j}^{\text{si}}\left(\alpha_{n,m,l,j}\right).
\end{equation}
In \eqref{eq_sinr_2}, the interference from the strongest interfering \acp{bs} of \note{ue}{n} is the only term depending on interference scenarios $i$ and $j$. Therefore, taking into account the inequality in \eqref{eq_lem_sinr}, the \ac{sinr} of \note{ue}{n} on \note{prb}{l\in\myL}, under interference scenario~$j$ is higher.

\subsection{Proof of Proposition \ref{lem_muting}}\label{app_lem_muting}
The set ${\N_{n,j}=\{k~|~c_{k,m}=1, \forall k\in\N, \forall m\in\J_{n,j}\}}$ is defined, denoting the indexes of \acp{ue} connected to the strongest interfering \acp{bs} of \note{ue}{n\in\N} for interference scenario $j\in\J'$. Given that ${\J_{n,i}\subsetneq\J_{n,j}}$, then ${\N_{n,i}\subsetneq\N_{n,j}, \forall i,j\in\J', i\neq j}$. Based on \eqref{eq_multil_obj} and \eqref{eq_multil_total_r}, the sum of the \ac{pf} metrics over all \acp{ue} on \note{prb}{l\in\myL}, under interference scenario $y\in\J'$ of \note{ue}{n}, can be written as
\begin{equation}
\label{eq_lem_muting}
\sum_{n'\in\N}\Omega_{n',l}^y = \Omega_{n,l}^y +\ \sum_{\mathclap{k\in\N_{n,y}}}\Omega_{k,l} +\qquad \sum_{\mathclap{\hat{n}\in\N\backslash\{n,\, \N_{n,y}\}}}\, \Omega_{\hat{n},l},
\end{equation}
where the first right-hand-side summand corresponds to the \ac{pf} metric of \note{ue}{n} on \note{prb}{l}, under interference scenario $y$. The second summand corresponds to the sum of the \ac{pf} metrics of the \acp{ue} connected to the strongest interfering \acp{bs} of \note{ue}{n}, considered to be muted in the interference scenario $y$, and the last summand represents the sum of the \ac{pf} metrics of the \acp{ue} connected to the remaining \acp{bs}. If it is assumed that the muting decision agrees with interference scenario $y$, then the second summand is equal to zero, because the cooperative interfering \acp{bs} are muted. Thus, for interference scenarios $i$ and $j$, agreeing with the muting decision $\balfa{1}{l}$, \eqref{eq_lem_muting} is rewritten as
\begin{subequations}
\label{eq_lem_rate_ij}
	\begin{alignat}{1}
\sum_{n'\in\N}\Omega_{n',l}^i &= \Omega_{n,l}^i +\qquad \sum_{\mathclap{\hat{n}\in\N\backslash\{n,\, \N_{n,i}\}}}\, \Omega_{\hat{n},l}, \label{eq_lem_rate_i}\\
\sum_{n'\in\N}\Omega_{n',l}^j &= \Omega_{n,l}^j +\qquad \sum_{\mathclap{\hat{n}\in\N\backslash\{n,\, \N_{n,j}\}}}\, \Omega_{\hat{n},l}. \label{eq_lem_rate_j}
	\end{alignat}
\end{subequations}
If $r_{n,l,i} = r_{n,l,j}$, then ${\Omega_{n,l}^i = \Omega_{n,l}^j}$. Hence, the only difference between \eqref{eq_lem_rate_i} and~\eqref{eq_lem_rate_j} lays on the second summand. This summand is determined by the sets ${\N\backslash\{n,\N_{n,j}\}\subsetneq\N\backslash\{n,\N_{n,i}\}}$ due to $\N_{n,i}\subsetneq\N_{n,j}$, ${\forall i,j\in\J', i\neq j}$. Therefore, it is possible to conclude that, $\sum_{\hat{n}\in\N\backslash\{n,\, \N_{n,i}\}} \Omega_{\hat{n},l} > \sum_{\hat{n}\in\N\backslash\{n,\, \N_{n,j}\}} \Omega_{\hat{n},l}$ and thus,
\begin{equation}
\label{eq_lem_rate_proof}
\sum_{n'\in\N}\Omega_{n',l}^i>\sum_{n'\in\N}\Omega_{n',l}^j,
\end{equation}
where it has been assumed that each non-muted \ac{bs} schedules one \ac{ue} with a non-zero \ac{pf} metric.

\subsection{Proof of Proposition \ref{lem_reducibility}}\label{app_lem_reducibility}
It is assumed, without loss of generality, that $J'$ \mcsi{11} reports are generated by \acp{ue} $\{n,k\}\in\N$ and received by \note{bs}{m\in\M}, with equal muting indicators sets indexed by $\{j,i\}\in\J'$, respectively, such that ${\J_{n,j}=\J_{k,i}}$. Hence, from a \ac{bs} perspective, the unique muting indicator set $j'\in\J_m'~|~{\J_{m,j'}=\J_{n,j}=\J_{k,i}}$, implies that $\N_{m,j'}=\{n,k\}$. Based on \eqref{eq_pfs}, the \ac{pf} metrics of \acp{ue} $n$ and $k$, on \note{prb}{l\in\myL}, under unique muting indicator set $\J_{m,j'}$, correspond to 
\begin{equation}
\label{eq_proof_lem_red}
	\begin{aligned}
		\Omega_{n,l,j}=\frac{r_{n,l,j}}{R_n},\\
		\Omega_{k,l,i}=\frac{r_{k,l,i}}{R_k}.
	\end{aligned}
\end{equation}
Thus, the following relations are possible between the \ac{pf} metrics from \eqref{eq_proof_lem_red}: ${\Omega_{n,l,j}=\Omega_{k,l,i}}$, $\Omega_{n,l,j}<\Omega_{k,l,i}$ or ${\Omega_{n,l,j}>\Omega_{k,l,i}}$. In the first case, there is no effect on the total sum of \ac{pf} metrics if \note{bs}{m} schedules \note{prb}{l} to any of the both \acp{ue}, since the \ac{pf} metrics are equal. In the remaining cases, however, selecting the \ac{ue} with the lowest \ac{pf} metric corresponds to a lower total sum of the \ac{pf} metrics. Hence, the optimal allocation of \note{prb}{l} under muting indicator set $\J_{m,j'}$ is given by \eqref{eq_lem_red}.

\bibliographystyle{IEEEtran}
\bibliography{refs_article}

\end{document}